\documentclass[11pt,letterpaper]{article}

\usepackage[margin=1in]{geometry}
\usepackage{amsmath,amssymb,amsthm,mathtools}
\usepackage{booktabs}
\usepackage{enumitem}
\usepackage{microtype}
\usepackage[hidelinks]{hyperref}

\newtheorem{theorem}{Theorem}[section]
\newtheorem{lemma}[theorem]{Lemma}
\newtheorem{proposition}[theorem]{Proposition}
\newtheorem{corollary}[theorem]{Corollary}
\theoremstyle{definition}
\newtheorem{definition}[theorem]{Definition}
\newtheorem{example}[theorem]{Example}
\theoremstyle{remark}
\newtheorem{remark}[theorem]{Remark}

\newcommand{\EFXz}{\mathrm{EFX}_0}
\newcommand{\EFXp}{\mathrm{EFX}^{+}}
\newcommand{\RI}{\mathrm{RI}}

\newcommand{\poly}{\operatorname{poly}}

\hypersetup{
  pdftitle={Rival-Injective Allocations: Support-List Structure and Maximum-Anchor EFX0 Certificates},
  pdfauthor={Junshuo Wang}
}

\title{Rival-Injective Allocations: Support-List Structure and
Maximum-Anchor \(\EFXz\) Certificates}
\author{Junshuo Wang\thanks{This work was developed with substantial assistance
from OpenAI ChatGPT and Codex agent workflows.  The author directed the research
program and retains final authority and responsibility for the manuscript and
its release; see the detailed
contribution and responsibility statement at the end.}\\[0.4em]
\small University of Electronic Science and Technology of China\\
\small \texttt{202522220534@std.uestc.edu.cn}}
\date{August 29, 2026}

\begin{document}
\maketitle

\begin{abstract}
We study complete allocations under nonnegative additive valuations, focusing
on the all-good form of envy-freeness up to any good (\(\EFXz\)), through the
positive supports of the goods.  We define
\emph{rival-injective} (RI) endpoint ownership: each good with nonempty positive
support is assigned to an agent who values it positively, and every ordered
observer--owner pair is used by at most one good.  RI ownership is exactly
proper list coloring of the graph
that joins two goods when their positive supports overlap in at least two
agents.  For pair-supported goods together with arbitrary exceptional goods,
fixing the exceptional owners yields a necessary-and-sufficient pair-capacity
completion criterion and an exact finite-domain owner constraint satisfaction
problem (CSP).

To connect this support structure to all-good fairness, we use the elementary
implication that every RI allocation in which each agent's own bundle is worth
at least her maximum-valued singleton is all-good \(\EFXz\).  We call the
own-bundle condition \emph{maximum-singleton dominance}.  A specified injective choice of
maximum-singleton anchors, together with residual support-list degeneracy,
constructs an RI allocation with this dominance by reverse greedy coloring.
Given the anchors, verification and construction take \(O(nm^2)\) time.  We
give two explicit witness families for this sufficient certificate: a
nonempty relatively open, 22-dimensional cone on a fixed
\(4\times10\) support face and a family for every \(n\ge4\) with two
universal-support goods and
\(m=(n-1)(n-2)+2\) goods.  Both families fail unanchored list degeneracy and
are not implied by the explicit pure-multigraph or published
high-girth/controlled-multiplicity hypotheses compared here.  We further study
recognition of the maximum-anchor certificate class, while leaving its general
complexity unresolved.  The exact list-coloring and pair-capacity
characterizations concern RI ownership, not general all-good \(\EFXz\)
existence; unrestricted four-agent, ten-good all-good \(\EFXz\) remains
unresolved.
\end{abstract}

\section{Introduction}

The existence of complete allocations that are envy-free up to any good
(all-good \(\EFXz\)) is a central problem in fair division of indivisible goods.
Exact existence is known for three additive agents, for four additive agents
with at most nine goods, and for several restricted valuation domains, but
remains open for general additive valuations with four or more agents
\cite{ChaudhuryGargMehlhorn2024,AlkassarFouzMehlhorn2026,
PlautRoughgarden2020,ByrkaEtAl2026}.
Recent counterexamples for broader monotone valuations can already be made
submodular, so the unresolved existence question invoked here is specifically
the additive one \cite{AkramiEtAlCounterexample2026}.
Support-restricted models provide a complementary route: graphical,
multigraph, and
selected hypergraph instances admit exact EFX theorems under several
structural hypotheses
\cite{ChristodoulouEtAl2023,BhaskarPandit2025,AfshinmehrEtAl2026,LianeasEtAl2026}.
These results make support overlap a natural intermediate object between
unrestricted additive valuations and purely graph-supported instances.  They
suggest a concrete question:
\begin{quote}
Which local restrictions on support overlap still yield transparent, complete
EFX allocations when the support hypergraph contains short cycles and genuine
high-support goods?
\end{quote}

We address the question for nonnegative additive valuations by following the
positive support of each good.  Assigning a good to an agent in its support can
be viewed as choosing an endpoint of its support hyperedge.  The relevant local
resource is an ordered pair \((i,j)\): observer \(i\) sees a good owned by
agent \(j\).  If no ordered resource is used twice, then each observer sees at
most one positively valued good in any rival bundle, and deleting that good
eliminates the rival's entire visible value.  We call such endpoint ownership
\emph{rival-injective} (RI).  This is the canonical support-only one-sparsity
condition suggested by the one-good deletion quantifier.  Writing
\(S_g=\{k:v_k(g)>0\}\) in this overview, RI equivalently says that, for a fixed
owner \(j\), the traces \(S_g\setminus\{j\}\) of the goods assigned to \(j\)
are pairwise disjoint.

The distinction between positive-good EFX (\(\EFXp\)) and all-good EFX
(\(\EFXz\)) is visible in the
two-good rival bundle \(\{p,z\}\), where observer \(i\) has \(v_i(p)>0\) and
\(v_i(z)=0\).  RI makes deletion of \(p\) harmless, but deleting \(z\) leaves
the visible singleton \(p\).  If \(i\) owns a maximum-singleton good \(h_i\),
then \(v_i(h_i)\ge v_i(p)\); this is the dominance supplied by our anchors.

The same condition has an exact coloring interpretation.  Form a conflict
graph on goods, joining two goods precisely when their positive supports share
at least two agents, and give each good its support as a color list.  Choosing
the owner is then choosing a list color, and RI is equivalent to proper list
coloring.  Thus support overlap leads first to an exact structural translation,
not directly to an exact characterization of either \(\EFXp\) or \(\EFXz\).  In
the important case of
pair-supported goods plus exceptional high-support goods, the pair goods can be
eliminated after the exceptional owners are fixed.  Each unordered agent pair
contributes two directed observer--owner capacities, producing an exact owner
CSP; two surviving candidates per exceptional good give the familiar
2-satisfiability (2-SAT) special case.

RI by itself controls only goods that an observer values positively.  The
missing ingredient for all-good \(\EFXz\) is value dominance: an observer
must also tolerate removal of a good she values at zero while a visible good
remains in the rival bundle.  It is enough that every agent values her own
bundle at least as much as her maximum-valued singleton; maximum anchors are
one constructive way to supply this dominance.  We precolor one distinct
maximum-singleton good for each active agent and delete from each residual
support list the colors blocked by adjacent anchors.  Residual list degeneracy
then lets reverse greedy coloring extend the precoloring.

\paragraph{Logical roadmap.}
The support sets \(S_g\), singleton maxima \(M_i\), and conflict graph \(C_2\)
are derived from the valuation instance; they are not extra hypotheses.  An
owner map and an anchor map are certificate choices.  Anchor residual list
degeneracy (A-LD) is a sufficient
condition on a specified anchor choice, not a property asserted for every
instance.  The proof dependencies are
\[
v\longmapsto (S_g,M_i,C_2)
\]
and
\[
\text{specified anchors + A-LD}
\Longrightarrow \text{RI + maximum-singleton dominance}
\Longrightarrow \EFXz .
\]
The first arrow records definitions; the second is the coloring construction;
the third is the elementary fairness implication used to organize the proof.

\paragraph{Contributions.}
The elementary RI--dominance implication organizes the construction and is not
an independent novelty claim; related sparse-bundle reasoning appears in prior
EFX and envy-freeness up to one less-preferred good (EFL) proofs discussed in
the Related Work section.  Our contribution is to make the support-side
mechanism explicit, exact, and algorithmic.
\begin{enumerate}[leftmargin=2em]
\item \emph{Structural representation.}  We prove that RI endpoint ownership
is equivalent to support-list coloring of the two-overlap conflict graph, so
support overlap becomes an exact coloring problem.
\item \emph{Exact completion structure.}  For pair goods and exceptional
goods, fixed exceptional owners admit an RI completion exactly when three
directed pair-capacity inequalities hold; this yields an exact
exceptional-owner CSP.  The two-choice 2-SAT case is a standard algorithmic
corollary, not a separate novelty claim.
\item \emph{Certificate realization.}  Specified injective
maximum-singleton anchors and residual list degeneracy construct RI together
with the dominance needed for complete \(\EFXz\).  Given the anchors, the
certificate is verified and an allocation is constructed in \(O(nm^2)\) time.
\item \emph{Recognition.}  A fixed anchor map is
accepted exactly when its generalized residual core is empty.  This gives
polynomial recognition for fixed \(n\), slice-wise polynomial (XP) parameterized
by the number of active agents, and fixed-parameter tractable (FPT) parameterized
by the total maximum-tie surplus
\(p=\sum_{i\in N^+}(|T_i|-1)\), an exact
two-active-agent characterization, and a polynomial safe-anchor sufficient
subclass.  We do not resolve the general recognition problem.
\item \emph{Witness families.}  We give a nonempty
relatively open \(\EFXz\) cone of full dimension 22 within a fixed
\(4\times10\) support face, and a family \(\mathcal U_n\) for every \(n\ge4\)
with two universal goods and saturated ordered resources.  These families fail
unanchored list degeneracy and the explicit pure-multigraph and
high-girth/controlled-multiplicity hypotheses examined in this paper.
\end{enumerate}

\paragraph{Positioning.}
Favorite-good arguments and capacitated orientations have substantial prior
literatures \cite{MarkakisSantorinaios2023,LohPagh2014}.  Reverse-greedy list
coloring and the elementary two-choice Boolean encoding are generic tools.  The
elementary RI--dominance implication is used only to organize the construction.
The claims specific to this paper concern the support-derived RI translation,
its exact ordered-resource completion system, the maximum-anchor residual-list
machinery, recognition, and the witness constructions, not the underlying
generic tools.
Likewise, failure of a cited sufficient hypothesis shows that a theorem does
not directly imply our family; it does not show that all methods in the prior
literature fail on every valuation in that family.

\section{Model and two EFX notions}

Let \(N\) be a finite nonempty set of \(n\) agents and let \(G\) be a finite set
of \(m\) indivisible goods.  Agent \(i\) has a nonnegative additive valuation
\(v_i\), so \(v_i(B)=\sum_{g\in B}v_i(g)\).  For bit-complexity statements,
valuation entries are nonnegative rationals encoded in binary; equivalent
support/maximum-set combinatorial input is specified in
Section~\ref{sec:recognition}.  An allocation
\(A=(A_i)_{i\in N}\) is \emph{complete} if the bundles partition \(G\).

\begin{definition}[Positive-good EFX and all-good EFX]
An allocation is \(\EFXp\) if, for all \(i,j\in N\) and every
\(g\in A_j\) with \(v_i(g)>0\),
\[
v_i(A_i)\ge v_i(A_j\setminus\{g\}).
\]
It is \(\EFXz\) if the same inequality holds for every \(g\in A_j\),
including goods of value zero to observer \(i\).
\end{definition}

Thus \(\EFXz\) is the stronger notion on instances containing zero values.  We
write
\[
S_g=\{i\in N:v_i(g)>0\},
\]
for the positive support of good \(g\).  A good with \(S_g=\varnothing\) is
called a zero-support good.  Such goods can be restored after all
positive-support goods have been allocated.  Write
\[
G^+=\{g\in G:S_g\ne\varnothing\}.
\]

Terminology is not uniform across the literature.  Some papers use ``EFX''
for the all-good definition above, whereas others reserve ``EFX0'' for that
definition and use ``EFX'' only for observer-positive removed goods.  We use
\(\EFXp\) and \(\EFXz\) in our formal definitions and theorem statements to keep
the two quantifier domains explicit, and we translate external results according
to their displayed definitions rather than their labels
\cite{PlautRoughgarden2020,AmanatidisEtAl2021}.

\begin{definition}[Zero-endpoint padding for theorem comparison]
\label{def:zero-padding}
For a good \(g\) with positive support \(S_g\), a \emph{zero-endpoint padding}
is a hyperedge representation
\[
\widehat S_g\supseteq S_g,
\]
in which every added endpoint \(i\in\widehat S_g\setminus S_g\) satisfies
\(v_i(g)=0\).  A collection \((\widehat S_g)_{g\in G^+}\) is admissible for a
cited hypergraph theorem only when that theorem's model permits zero-valuing
endpoints and all of its other representation conventions are met.
\end{definition}

Positive support is the minimal such representation, but it need not be the
only representation allowed by an external model.  Consequently, to show that
a valuation instance does not satisfy a published support predicate, we must
exclude every admissible padding.  Even then the conclusion is only that the
displayed sufficient hypothesis does not directly imply the instance; it does
not rule out another algorithm or argument from the same paper.

\section{Rival injectivity and support-list coloring}

\begin{definition}[Endpoint ownership and rival injectivity]
An endpoint owner map assigns every positive-support good \(g\) an owner
\(o(g)\in S_g\).  It is \emph{rival-injective}, abbreviated \(\RI\), if for
every ordered pair \(i\ne j\),
\[
\left|\{g\in G^+:i\in S_g,\ o(g)=j\}\right|\le1.
\]
\end{definition}

For each fixed owner \(j\), RI is equivalently the pairwise disjointness of
the support traces \(S_g\setminus\{j\}\) over goods with \(o(g)=j\).  Thus a
rival bundle is one-sparse in every outside observer's positive-support row.

\begin{lemma}[RI implies positive-good EFX]\label{lem:ri-prompt}
Every rival-injective endpoint owner map extends to a complete \(\EFXp\)
allocation.
\end{lemma}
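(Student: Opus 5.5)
We construct the allocation directly: set $A_j=\{g\in G^+:o(g)=j\}$ for each agent $j$, and then place the zero-support goods $G\setminus G^+$ in some bundle. The choice of bundle for a zero-support good is arbitrary; to be concrete, we give all of them to one fixed agent. The result is a complete allocation, because every positive-support good has exactly one owner $o(g)\in S_g\subseteq N$, and every remaining good is placed exactly once.

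Next we check $\EFXp$. Fix an observer $i$, a rival $j\ne i$, and a good $g\in A_j$ with $v_i(g)>0$. Since $v_i(g)>0$, the good $g$ has positive support, so $g\in G^+$, $i\in S_g$, and $o(g)=j$. Now let $g'$ be any other good in $A_j$. If $g'$ is a zero-support good, then $v_i(g')=0$. If instead $g'\in G^+$, then $o(g')=j$. In that case $i\in S_{g'}$ would give two distinct goods counted in $\{h\in G^+:i\in S_h,\ o(h)=j\}$, which contradicts RI. Hence $v_i(g')=0$ in this case as well. By additivity, $v_i(A_j\setminus\{g\})=0\le v_i(A_i)$, where the last inequality holds because valuations are nonnegative. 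The pair $i=j$ is trivial, since $v_i(A_i\setminus\{g\})\le v_i(A_i)$ by monotonicity.

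The only point requiring care is the zero-support goods. They add no value to any observer, because $S_g=\varnothing$ means $v_k(g)=0$ for every agent $k$. So wherever they are placed, they neither break RI (which is stated over $G^+$ only) nor change any rival's visible value. I do not expect a genuine obstacle here. The proof is essentially the observation that RI leaves at most one good of positive $v_i$-value in each rival bundle, so deleting a positively valued good removes all of the rival's visible value.
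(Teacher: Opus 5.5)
Your proof is correct and follows the paper's argument essentially verbatim: you give each positive-support good to its owner, place the zero-support goods arbitrarily, and use RI to show that a rival bundle contains at most one good the observer values positively, so deleting that good leaves the observer seeing value zero. Your explicit treatment of the zero-support goods and of the self-comparison case just spells out what the paper's shorter proof leaves implicit.
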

\begin{proof}
Give each positive-support good to its owner and restore zero-support goods
arbitrarily.  Fix \(i\ne j\).  Rival injectivity says that \(A_j\) contains at
most one good valued positively by \(i\).  If \(g\in A_j\) is positive to \(i\),
then deleting \(g\) leaves value zero to \(i\), which is at most the
nonnegative value of \(A_i\).
Self-comparisons follow from nonnegativity.
\end{proof}

Define the \emph{two-overlap conflict graph} \(C_2\) on the positive-support
goods by
\[
gh\in E(C_2)\quad\Longleftrightarrow\quad |S_g\cap S_h|\ge2,
\]
and give vertex \(g\) the color list \(L(g)=S_g\).

\begin{theorem}[RI--list-coloring equivalence]\label{thm:ri-list}
An endpoint owner map is rival-injective if and only if it is a proper list
coloring of \(C_2\) from the lists \(S_g\).
\end{theorem}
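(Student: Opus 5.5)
The plan is to prove both directions directly from the definitions. An endpoint owner map already satisfies $o(g)\in S_g=L(g)$ for every $g\in G^+$, so it is automatically a list coloring of $C_2$ from the lists $S_g$. The only content of the theorem is therefore that rival injectivity coincides with properness: no edge $gh$ of $C_2$ has $o(g)=o(h)$. I would first rewrite RI in its contrapositive form. A violation of RI is an ordered pair $i\ne j$ together with two distinct goods $g\ne h$ in $G^+$ such that
\[
i\in S_g\cap S_h\quad\text{and}\quad o(g)=o(h)=j .
\]

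\emph{RI $\Rightarrow$ proper.} Suppose $gh\in E(C_2)$ and $o(g)=o(h)=j$, aiming for a contradiction. Since $j=o(g)\in S_g$ and $j=o(h)\in S_h$, the agent $j$ lies in $S_g\cap S_h$. Because $|S_g\cap S_h|\ge 2$, there is a second agent $i\ne j$ in $S_g\cap S_h$. Then both $g$ and $h$ belong to $\{g'\in G^+: i\in S_{g'},\ o(g')=j\}$, so this set has size at least $2$, which contradicts RI.

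\emph{Proper $\Rightarrow$ RI.} Suppose RI fails: there are $i\ne j$ and distinct $g,h\in G^+$ with $i\in S_g\cap S_h$ and $o(g)=o(h)=j$. As above, $j\in S_g\cap S_h$, because owners lie in their goods' supports. So $S_g\cap S_h\supseteq\{i,j\}$ has at least two elements, which means $gh\in E(C_2)$. But $g$ and $h$ receive the same color $j$, so the coloring is not proper.

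There is no real obstacle here. The one point to state explicitly is that the owner $j$ always lies in $S_g\cap S_h$, which supplies the second shared agent in both directions; this is exactly why the threshold in $C_2$ is $2$ rather than $1$. I would also note the degenerate cases. Zero-support goods are excluded from $C_2$ and from the RI condition alike. The requirement $i\ne j$ in RI matches the requirement that the two shared agents be distinct.
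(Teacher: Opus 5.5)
Your proof is correct and follows the paper's argument essentially line for line: in both directions the key fact is that the common owner $j$ lies in $S_g\cap S_h$, so a two-overlap edge with a shared owner yields a second agent $i\ne j$ who sees two goods owned by $j$, and conversely an RI violation for $(i,j)$ places the distinct agents $i,j$ in the intersection. Your remark that every endpoint owner map is automatically a list coloring from $L(g)=S_g$ is a harmless clarification that the paper leaves implicit.
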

\begin{proof}
If adjacent goods \(g,h\) receive the same owner \(j\), their support
intersection contains \(j\) and at least one other agent \(i\).  Observer \(i\)
then sees two goods owned by rival \(j\), violating RI.

Conversely, if RI fails for \((i,j)\), two distinct goods \(g,h\) are both
visible to \(i\) and both owned by \(j\).  Endpoint ownership puts \(j\) in both
supports, so the intersection contains the distinct agents \(i,j\).  Hence
\(g,h\) are adjacent and have the same color.
\end{proof}

\begin{definition}[Support-list degeneracy]
The support instance is list-degenerate if every nonempty
\(Y\subseteq V(C_2)\) contains a good \(g\) such that
\[
d_{C_2[Y]}(g)<|S_g|.
\]
\end{definition}

\begin{corollary}\label{cor:ld}
Support-list degeneracy is a polynomially recognizable sufficient condition for
a complete \(\EFXp\) allocation.
\end{corollary}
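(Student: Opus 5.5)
The corollary has two parts: list degeneracy is sufficient for a complete \(\EFXp\) allocation, and it can be recognized in polynomial time. I would prove them separately.

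\emph{Sufficiency.} I would use list degeneracy to produce a proper list coloring of \(C_2\) from the lists \(S_g\). Then Theorem~\ref{thm:ri-list} turns that coloring into a rival-injective owner map, and Lemma~\ref{lem:ri-prompt} turns the owner map into a complete \(\EFXp\) allocation.

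The coloring comes from a peeling order. Set \(Y_0=V(C_2)=G^+\). While \(Y_t\) is nonempty, apply the hypothesis to \(Y_t\) to get a good \(g_t\) with \(d_{C_2[Y_t]}(g_t)<|S_{g_t}|\), and set \(Y_{t+1}=Y_t\setminus\{g_t\}\). This yields an ordering \(g_1,\dots,g_{|G^+|}\).

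Next, color in reverse order, from the last good peeled back to the first. When \(g_t\) is reached, its already colored neighbours are exactly its neighbours in \(Y_t\setminus\{g_t\}\). There are \(d_{C_2[Y_t]}(g_t)\) of them, which is fewer than \(|S_{g_t}|\). So some color in \(S_{g_t}\) is unused by them, and we assign it. The result is a proper list coloring with \(o(g)\in S_g\), as required. Positive-support goods have nonempty lists, so the argument is well defined.

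\emph{Recognition.} Building \(C_2\) takes \(O(m^2 n)\) time, by comparing supports pairwise. For the test, I would use the standard fact that a greedy peeling certifies list degeneracy. Repeatedly delete any vertex whose current degree is less than its list size. The instance is list-degenerate if and only if this process empties the graph.

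The forward direction is immediate: by hypothesis, every nonempty remainder has a deletable vertex. For the converse, deleting vertices only lowers degrees in the induced subgraph of any set \(Y\). Suppose the process empties the graph, and let \(g\) be the first vertex of \(Y\) that it deletes. At that moment \(g\)'s degree is at least \(d_{C_2[Y]}(g)\), because all of \(Y\) is still present. Since it was below \(|S_g|\) then, \(d_{C_2[Y]}(g)<|S_g|\) as well.

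This monotonicity argument is the only point that needs care, because degeneracy quantifies over all subsets \(Y\) and exhaustive checking is unavailable. Once it is in place, the whole test runs in polynomial time, in fact \(O(nm^2)\).
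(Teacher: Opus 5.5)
Your proposal is correct and takes essentially the same route as the paper: repeatedly delete a vertex whose degree is below its list size, color in reverse deletion order, and then apply Theorem~\ref{thm:ri-list} and Lemma~\ref{lem:ri-prompt}. Your explicit monotonicity argument, showing that any single greedy peeling run decides list degeneracy, supplies a detail the paper leaves implicit at this point; the paper proves the analogous order-independence only later, for the anchored residual lists, in Proposition~\ref{prop:core}.
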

\begin{proof}
Repeatedly delete a vertex satisfying the strict degree bound, and color in
reverse deletion order.  At most \(|S_g|-1\) colors are forbidden when \(g\) is
inserted.  Apply Theorem~\ref{thm:ri-list} and Lemma~\ref{lem:ri-prompt}.
\end{proof}

\section{Pair capacities and exceptional hyperedges}

Partition the positive-support goods into pair goods
\[
P=\{g:|S_g|=2\}
\]
and exceptional goods
\[
X=\{g:|S_g|\ge3\}.
\]
Support-one goods are forced to their unique endpoint and consume no rival
resource.  For an unordered pair \(\{i,j\}\), let
\[
\mu_{ij}=|\{g\in P:S_g=\{i,j\}\}|.
\]
Fix an owner choice \(\phi(x)\in S_x\) for each \(x\in X\), and set
\[
L_{ij}(\phi)=
|\{x\in X:\phi(x)=j,\ i\in S_x\setminus\{j\}\}|.
\]

Once \(\phi\) is fixed, the remaining calculation is an ordinary pairwise
orientation-capacity elimination: a \(\{i,j\}\)-good chooses one of two
directions, each with residual capacity at most one.  The point of the next
theorem is the exact reduction from RI completion to these independent
directed resources, not a new generic orientation technique.  This is analogous
only at a broad level to capacitated-orientability models; Loh and Pagh study a
different random-hypergraph setting \cite{LohPagh2014}.

\begin{theorem}[Exact pair completion]\label{thm:pair-capacity}
For fixed exceptional owners \(\phi\), the pair goods admit endpoint
orientations completing \(\phi\) to an RI owner map if and only if, for every
unordered pair \(\{i,j\}\),
\[
L_{ij}\le1,\qquad L_{ji}\le1,\qquad
\mu_{ij}+L_{ij}+L_{ji}\le2.
\]
\end{theorem}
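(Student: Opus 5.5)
The plan is to decompose the RI condition into independent ordered resources $(i,j)$ and count, for each ordered pair, how many goods consume it. A good $g$ with owner $o(g)=j$ consumes resource $(i,j)$ exactly when $i\in S_g\setminus\{j\}$. So RI says every ordered resource is consumed at most once. Support-one goods consume nothing. A pair good with $S_g=\{i,j\}$ consumes exactly $(i,j)$ if oriented to $j$, and exactly $(j,i)$ if oriented to $i$. An exceptional good $x$ with $\phi(x)=j$ consumes $(i,j)$ for each $i\in S_x\setminus\{j\}$. Hence for fixed $\phi$ the load on $(i,j)$ from exceptional goods is exactly $L_{ij}(\phi)$. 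The key structural observation is that pair goods with support $\{i,j\}$ touch only the two resources $(i,j)$ and $(j,i)$. These resources are touched by no pair good of any other support, so the completion problem splits into independent subproblems, one per unordered pair $\{i,j\}$.

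For necessity, I fix an RI completion and look at a single unordered pair $\{i,j\}$. Resource $(i,j)$ is consumed $L_{ij}$ times by exceptional goods plus $a$ times by pair goods oriented toward $j$. Resource $(j,i)$ is consumed $L_{ji}+b$ times, where $a+b=\mu_{ij}$. RI gives
\[
L_{ij}+a\le1,\qquad L_{ji}+b\le1 .
\]
Since $a,b\ge0$, this yields $L_{ij}\le1$ and $L_{ji}\le1$. Summing the two inequalities gives $\mu_{ij}+L_{ij}+L_{ji}\le2$.

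For sufficiency, I handle each $\{i,j\}$ separately. The residual capacities are $c_{ij}=1-L_{ij}\ge0$ and $c_{ji}=1-L_{ji}\ge0$, and the third inequality says $\mu_{ij}\le c_{ij}+c_{ji}$. So I orient $\min(\mu_{ij},c_{ij})$ of the $\{i,j\}$-goods toward $j$ and the remaining goods toward $i$. The number oriented toward $i$ is at most $c_{ji}$, which again follows from the third inequality. This makes every resource load at most one. Because the subproblems use disjoint resources, combining all of them gives an owner map in which every ordered pair is used at most once, which is RI.

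The only point needing care is the bookkeeping claim that the total load on $(i,j)$ is exactly the exceptional load plus the pair load from $\{i,j\}$-goods. This rests on two facts: a good with $|S_g|=2$ and owner $j$ consumes only the single resource $(\,S_g\setminus\{j\},\,j\,)$, and support-one goods consume no resource. Once that decomposition is written out, the rest is the two-line counting above. I would also note that Theorem~\ref{thm:ri-list} is not needed here; the argument works directly from the definition of RI.
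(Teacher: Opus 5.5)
Your proof is correct and essentially the paper's own argument: both split RI into the two ordered resources $(i,j)$ and $(j,i)$ for each unordered pair, impose $L_{ij}+a\le1$ and $L_{ji}+(\mu_{ij}-a)\le1$, and use that distinct unordered pairs touch disjoint resources. The paper only states that an integer $a$ exists exactly when $\max\{0,\mu_{ij}+L_{ji}-1\}\le\min\{\mu_{ij},1-L_{ij}\}$, whereas you unpack this into necessity by nonnegativity and summation and sufficiency by the explicit choice $a=\min(\mu_{ij},1-L_{ij})$.
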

\begin{proof}
Suppose \(k\) of the \(\mu_{ij}\) pair goods are assigned to \(j\), and the
remaining \(\mu_{ij}-k\) to \(i\).  The two ordered capacities are
\[
L_{ij}+k\le1,\qquad
L_{ji}+\mu_{ij}-k\le1.
\]
An integer \(k\) exists exactly when
\[
\max\{0,\mu_{ij}+L_{ji}-1\}
\le
\min\{\mu_{ij},1-L_{ij}\},
\]
which is equivalent to the three displayed inequalities.  Different unordered
pairs use disjoint ordered resources, so their completions are independent.
\end{proof}

If some \(\mu_{ij}>2\), RI is impossible.  Otherwise define the surviving
candidates
\[
C_x=\{j\in S_x:\mu_{ij}\le1\text{ for all }i\in S_x\setminus\{j\}\}.
\]
For \(i\ne j\), let
\[
Q_{ij}=\{(x,j):j\in C_x,\ i\in S_x\setminus\{j\}\}.
\]

\begin{proposition}[Exact exceptional-owner CSP]\label{prop:csp}
Assume \(\mu_{ij}\le2\) for every unordered pair \(\{i,j\}\).
An RI owner map exists if and only if one candidate is selected from each
\(C_x\), at most one selected candidate belongs to each \(Q_{ij}\), and, when
\(\mu_{ij}=1\), at most one selected candidate belongs to
\(Q_{ij}\cup Q_{ji}\).
\end{proposition}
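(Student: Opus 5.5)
The plan is to reduce Proposition~\ref{prop:csp} to Theorem~\ref{thm:pair-capacity}. An RI owner map exists if and only if there is an exceptional owner choice \(\phi\) satisfying the three inequalities of that theorem for every unordered pair. Support-one goods are forced and consume no ordered resource, so they do not enter. The task is therefore to show that, under \(\mu_{ij}\le2\), the three inequalities for a given \(\phi\) are exactly the three CSP conditions of the proposition. I would prove the two directions by translating each condition in turn, keeping track of which \((x,j)\) count toward \(L_{ij}\).

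First I would establish the candidate restriction \(\phi(x)\in C_x\). Suppose \(\phi(x)=j\) and some \(i\in S_x\setminus\{j\}\) has \(\mu_{ij}=2\). Then \(L_{ij}\ge1\), so \(\mu_{ij}+L_{ij}+L_{ji}\ge3\), violating the third inequality. Hence any \(\phi\) passing Theorem~\ref{thm:pair-capacity} selects from \(C_x\).

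Next, once selections lie in the sets \(C_x\), the selected pairs \((x,\phi(x))\) belonging to \(Q_{ij}\) are exactly those counted by \(L_{ij}\). So \(L_{ij}\) equals the number of selected candidates in \(Q_{ij}\), and the condition \(L_{ij}\le1\) is the same as ``at most one selected candidate in \(Q_{ij}\)''. The sets \(Q_{ij}\) and \(Q_{ji}\) are disjoint, because their owner coordinates are \(j\) and \(i\) respectively. Therefore \(L_{ij}+L_{ji}\) counts the selections in \(Q_{ij}\cup Q_{ji}\).

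With that identity in hand, I would handle the third inequality by cases on \(\mu_{ij}\in\{0,1,2\}\):
\begin{itemize}
\item If \(\mu_{ij}=0\), the inequality reads \(L_{ij}+L_{ji}\le2\), which already follows from the two single-direction bounds.
\item If \(\mu_{ij}=1\), it reads \(L_{ij}+L_{ji}\le1\), which is precisely the union condition in the proposition.
\item If \(\mu_{ij}=2\), it forces \(L_{ij}=L_{ji}=0\). This is automatic for candidate selections: if \(j\in C_x\) and \(i\in S_x\), then \(\mu_{ij}\le1\), so \(Q_{ij}\) is empty whenever \(\mu_{ij}=2\).
\end{itemize}
The converse direction runs the same equivalences backwards. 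A CSP solution \(\phi\) satisfies all three inequalities of Theorem~\ref{thm:pair-capacity}, so that theorem completes \(\phi\) on the pair goods to an RI owner map.

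The main obstacle is bookkeeping rather than any deep step. I need to confirm that the definition of \(C_x\) exactly captures the \(\mu_{ij}=2\) case in both directions. I also need to check that goods with \(i\notin S_x\) contribute nothing, so that \(Q_{ij}\) records exactly the \(L_{ij}\) contributions. Finally, I will note that the standing assumption \(\mu_{ij}\le2\) is used only to rule out the infeasible case \(\mu_{ij}\ge3\), which was observed just before the proposition.
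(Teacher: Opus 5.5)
Your proposal is correct and takes the same approach as the paper. You reduce to Theorem~\ref{thm:pair-capacity}, obtain the unary restriction to $C_x$ from the third inequality when $\mu_{ij}=2$, and identify the remaining at-most-one constraints with $L_{ij}\le1$ and, for $\mu_{ij}=1$, with $L_{ij}+L_{ji}\le1$; your case split on $\mu_{ij}\in\{0,1,2\}$ and the disjointness of $Q_{ij}$ and $Q_{ji}$ simply spell out bookkeeping that the paper's three-line proof leaves implicit.
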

\begin{proof}
The unary deletion of a candidate touching a pair of multiplicity two is
necessary.  The remaining two families of at-most-one constraints are exactly
\(L_{ij}\le1\) and \(L_{ij}+L_{ji}\le1\) when \(\mu_{ij}=1\).
Theorem~\ref{thm:pair-capacity} proves both directions.
\end{proof}

\begin{corollary}[Two-choice 2-SAT]\label{cor:2sat}
If every \(C_x\) is nonempty and has size at most two, RI feasibility is
decidable in polynomial time by 2-SAT.
\end{corollary}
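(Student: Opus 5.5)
The plan is to encode the exact CSP of Proposition~\ref{prop:csp} as a 2-CNF formula and then invoke linear-time 2-SAT. First compute the multiplicities \(\mu_{ij}\); if some \(\mu_{ij}>2\), answer ``infeasible'' at once, as noted before the proposition. Otherwise compute every \(C_x\). The hypothesis says each \(C_x\) is nonempty with \(|C_x|\le2\). For each exceptional good \(x\) introduce one Boolean variable \(b_x\):
\begin{itemize}
\item if \(C_x=\{a_x,c_x\}\), let \(b_x\) true mean \(x\) selects \(a_x\) and \(b_x\) false mean it selects \(c_x\);
\item if \(|C_x|=1\), the choice is forced, and we fix \(b_x\) by a unit clause (written as a duplicated 2-clause).
\end{itemize}
Each selection event ``\(x\) selects \(j\)'' is then a literal \(\ell_{x,j}\).

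Next, translate each at-most-one constraint of Proposition~\ref{prop:csp} into pairwise clauses. Each \(Q_{ij}\) is a set of pairs \((x,j)\). When \(\mu_{ij}=1\), the relevant set is \(Q_{ij}\cup Q_{ji}\). For such a set \(Q\), ``at most one selected element'' is equivalent to the clauses \((\lnot\ell_{x,j}\lor\lnot\ell_{y,k})\) for all distinct pairs \((x,j),(y,k)\in Q\).

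Two special situations need care.
\begin{itemize}
\item \emph{Pairs from the same good.} If \(x=y\), the two elements of \(Q\) correspond to two distinct candidates of one good, and these are never both selected. The clause is then automatically satisfied, or can simply be omitted.
\item \emph{Repeated pairs.} Within a single \(Q_{ij}\), every element has owner \(j\), so the same pair \((x,j)\) appears at most once. In \(Q_{ij}\cup Q_{ji}\), the pairs \((x,j)\) and \((x,i)\) come from the same good and fall under the previous case.
\end{itemize}
Every clause has at most two literals, so the conjunction is a 2-CNF formula.

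By Proposition~\ref{prop:csp}, satisfying assignments correspond exactly to valid candidate selections, and these in turn correspond to RI owner maps; the pair completion is recovered by Theorem~\ref{thm:pair-capacity}. Hence RI feasibility is equivalent to satisfiability of the formula. That satisfiability is decided in linear time in the formula size by the standard implication-graph and strongly-connected-components algorithm.

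For the running time, there are \(O(n^2)\) sets \(Q_{ij}\). Each has at most \(2|X|\) elements, so the number of clauses is \(O(n^2m^2)\), which is polynomial. Computing \(\mu_{ij}\), \(C_x\) and \(Q_{ij}\) takes \(O(n^2m)\) time. The only point needing care is the faithful clause encoding and the same-good case above; beyond that, the result is a direct corollary.
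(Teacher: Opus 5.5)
Your proposal is correct and follows essentially the paper's proof: it encodes the exact CSP of Proposition~\ref{prop:csp} as a 2-CNF with an exactly-one constraint per exceptional good and a negative binary clause for every pair of distinct candidate identities in \(Q_{ij}\) or, when \(\mu_{ij}=1\), in \(Q_{ij}\cup Q_{ji}\). Your single variable \(b_x\) is just a compact form of the paper's exactly-one clause over the one or two candidates, and your explicit \(\mu_{ij}>2\) rejection, same-good tautology check, and size bound fill in routine details the paper leaves implicit.
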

\begin{proof}
Use an exactly-one clause for the one or two candidates of each exceptional
good.  For every two distinct candidate identities occurring in a forbidden
set \(Q_{ij}\), or in \(Q_{ij}\cup Q_{ji}\) with \(\mu_{ij}=1\), add a negative
binary clause.  Proposition~\ref{prop:csp} gives exactness.  This is the
elementary Boolean encoding of a two-element list-conflict system.
\end{proof}

\begin{corollary}[Fixed number of wide exceptions]\label{cor:wide}
Let \(t=|\{x\in X:|C_x|>2\}|\).  RI feasibility can be decided using at most
\[
\prod_{x:|C_x|>2}|C_x|\le n^t
\]
calls to 2-SAT.  Consequently, the problem is polynomial for every fixed
\(t\), but this bound is XP rather than FPT in \(t\).
\end{corollary}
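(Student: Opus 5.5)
The plan is to branch over the wide exceptional goods and reduce every branch to an instance of Corollary~\ref{cor:2sat}.

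\emph{Preliminary checks.} First, if some \(\mu_{ij}>2\), then RI is impossible, and we reject without any 2-SAT calls. Otherwise the hypothesis of Proposition~\ref{prop:csp} holds. Compute every \(C_x\), which takes polynomial time. If some \(C_x\) is empty, then Proposition~\ref{prop:csp} shows that no RI owner map exists, and we reject. Let \(W=\{x\in X:|C_x|>2\}\), so \(|W|=t\).

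\emph{Branching.} Enumerate every tuple \((c_x)_{x\in W}\in\prod_{x\in W}C_x\). There are \(\prod_{x\in W}|C_x|\) such tuples, and since \(C_x\subseteq S_x\subseteq N\), each factor is at most \(n\), so the count is at most \(n^t\). For a fixed tuple, define a reduced candidate system:
\[
C'_x=\{c_x\}\ (x\in W),\qquad C'_x=C_x\ (x\notin W).
\]
Every \(C'_x\) is then nonempty and has size at most two.

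\emph{Key step.} I must check that the constraint system of Proposition~\ref{prop:csp} with the lists \(C'_x\) has exactly the solutions of the original system that select \(c_x\) on \(W\). The at-most-one constraints are conditions on the set of selected pairs \((x,j)\). Restricting the lists only removes unselected possibilities, so a selection from \(C'\) is feasible exactly when it is feasible as a selection from \(C\). In this sense the families \(Q_{ij}\) may be intersected with the reduced candidate set without changing satisfaction. The clause construction of Corollary~\ref{cor:2sat} then applies verbatim: each forced candidate becomes a unit clause, which is a legitimate 2-SAT clause. So one 2-SAT call decides that branch.

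\emph{Conclusion.} The instance is RI-feasible if and only if some branch is satisfiable. Each branch is built and solved in polynomial time, so the total time is \(n^t\cdot\poly(n,m)\), which is polynomial for every fixed \(t\). The exponent grows with \(t\), so this is XP rather than FPT. The statement only claims this XP bound, so no hardness argument is needed.

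The main obstacle is only bookkeeping. I need to make sure that fixing a wide good's owner interacts correctly with the pair-multiplicity constraints for \(\mu_{ij}=1\). Those constraints are already encoded through \(Q_{ij}\cup Q_{ji}\), and fixing a candidate just turns some of their clauses into unit propagations. So no new argument beyond Proposition~\ref{prop:csp} is required.
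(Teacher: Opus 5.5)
Your proposal is correct and follows essentially the paper's route: enumerate owners for the wide exceptions, then decide each branch with one 2-SAT call via Corollary~\ref{cor:2sat}, giving at most \(\prod_{x:|C_x|>2}|C_x|\le n^t\) calls. The only difference is bookkeeping. You encode each fixed choice as a singleton list (a unit clause) and let the 2-SAT solver propagate it. The paper does that propagation explicitly: it rejects branches whose fixed choices already violate an ordered-resource constraint and adds negative unit clauses for the narrow candidates that conflict with a fixed choice.
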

\begin{proof}
First reject if some \(C_x\) is empty.  Enumerate an owner for every wide
exception.  Reject a branch if its fixed choices already violate an
ordered-resource constraint.  For each remaining narrow candidate that
conflicts with a fixed choice, add the negative unit clause forbidding that
candidate; if a good loses all candidates, reject the branch.  The remaining
exactly-one and pair-conflict constraints are conjunctive normal form with at
most two literals per clause (2-CNF), as in Corollary~\ref{cor:2sat}.  The
number of branches is the displayed product.
\end{proof}

\section{An RI--dominance bridge and maximum-anchor certificates}

Set
\[
M_i=\max_{g\in G}v_i(g),\qquad
N^+=\{i\in N:M_i>0\},
\]
with \(M_i=0\) when \(G=\varnothing\).

\begin{definition}[Maximum-singleton dominance]
\label{def:maximum-dominance}
A complete allocation \(A\) has \emph{maximum-singleton dominance} if
\[
v_i(A_i)\ge M_i
\qquad\text{for every }i\in N.
\]
This is a property of the chosen allocation, not an assumption that every
allocation of the valuation instance satisfies.
\end{definition}

The following elementary implication isolates the fairness step used by the
certificate construction.

\begin{proposition}[RI--dominance bridge]
\label{prop:ri-dominance}
Let \(o\) be an RI endpoint owner map.  Give every positive-support good \(g\)
to \(o(g)\), and restore the zero-support goods arbitrarily.  If the resulting
complete allocation has maximum-singleton dominance, then it is \(\EFXz\).
\end{proposition}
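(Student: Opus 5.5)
We verify the \(\EFXz\) inequality \(v_i(A_i)\ge v_i(A_j\setminus\{g\})\) directly for every ordered pair \(i,j\in N\) and every \(g\in A_j\).

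\emph{Self-comparisons.} When \(i=j\) the inequality is immediate, since \(A_i\setminus\{g\}\subseteq A_i\) and values are nonnegative.

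\emph{Bounding a rival bundle.} Fix \(i\ne j\). Every good \(h\in A_j\) with \(v_i(h)>0\) has nonempty support, so it is a positive-support good owned by \(j\) under \(o\), with \(i\in S_h\). Zero-support goods contribute nothing to \(v_i\). Rival injectivity for the ordered pair \((i,j)\) therefore says that \(A_j\) contains at most one good \(p\) with \(v_i(p)>0\). Hence
\[
v_i(A_j)\in\{0,\;v_i(p)\}.
\]

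\emph{Case split on the deleted good \(g\in A_j\).}
\begin{enumerate}[leftmargin=2em]
\item If \(A_j\) has no visible good, then \(v_i(A_j\setminus\{g\})=0\le v_i(A_i)\).
\item If the unique visible good is \(p\) and \(g=p\), the remaining value is \(0\). This is the case already covered by Lemma~\ref{lem:ri-prompt}.
\item If \(g\ne p\) (including the case \(v_i(g)=0\)), then
\[
v_i(A_j\setminus\{g\})=v_i(p)\le M_i\le v_i(A_i).
\]
The first inequality holds because \(M_i\) is the maximum singleton value over all goods. The second is maximum-singleton dominance.
\end{enumerate}

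The only step needing care is the third case, the one where \(\EFXz\) is strictly stronger than \(\EFXp\). Dominance is used exactly there. No other step presents an obstacle; we should only check that the bound \(v_i(A_j)\le v_i(p)\) also holds once the zero-support goods have been restored arbitrarily, which it does because those goods have value zero to every agent.
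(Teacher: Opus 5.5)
Your proof is correct and takes essentially the same route as the paper: rival injectivity leaves at most one visible good $p$ in any rival bundle, and the only nontrivial case, deleting some $g\ne p$, is closed by the chain $v_i(p)\le M_i\le v_i(A_i)$. Your remark that arbitrarily restored zero-support goods are invisible to every observer also matches the paper. Inactive agents fall under your no-visible-good case, which is the paper's separate remark about them.
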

\begin{proof}
Let \(A\) be the resulting allocation and fix \(i,j\in N\) and \(g\in A_j\).
If \(i=j\), nonnegativity gives
\(v_i(A_i)\ge v_i(A_i\setminus\{g\})\).

Suppose \(i\ne j\).  By RI, \(A_j\) contains at most one positive-support good
\(p\) with \(v_i(p)>0\).  If no such \(p\) exists, then
\(v_i(A_j\setminus\{g\})=0\).  If \(g=p\), deleting \(g\) again leaves value
zero.  Otherwise every good remaining in \(A_j\) is worth zero to \(i\), except
possibly \(p\), and hence
\[
v_i(A_j\setminus\{g\})=v_i(p)\le M_i\le v_i(A_i).
\]
This covers observer-zero removed goods as well as positive removed goods.
When \(i\) is inactive, \(M_i=0\) and her valuation row is identically zero,
so the same argument is automatic.  A zero-support good is worth zero to every
observer and therefore neither creates a second visible good nor changes any
of these inequalities.  Thus arbitrary restoration is harmless, and every
all-good comparison holds.
\end{proof}

We now give one polynomially checkable way to construct the two properties in
Proposition~\ref{prop:ri-dominance}.  Suppose a specified injection
\(i\mapsto h_i\) from \(N^+\) to the positive-support goods satisfies
\[
i\in S_{h_i},\qquad v_i(h_i)=M_i.
\]
The injection is a certificate choice: its existence is not built into the
valuation model.  Let \(H=\{h_i:i\in N^+\}\), precolor \(h_i\) by \(i\), and
define the residual list
\[
L_H(g)=S_g\setminus
\{i\in N^+:gh_i\in E(C_2)\},
\qquad g\notin H.
\]

\begin{definition}[Anchor residual list degeneracy]
We say that the anchor map satisfies \emph{anchor residual list degeneracy}
(A-LD) if every nonempty
\(Y\subseteq V(C_2)\setminus H\) contains \(g\in Y\) such that
\[
d_{C_2[Y]}(g)<|L_H(g)|.
\]
\end{definition}

For fixed supports and a specified anchor map, \(C_2\) and every residual list
\(L_H(g)\) are uniquely determined.  The definition quantifies over all
nonempty residual vertex sets, so A-LD does not depend on choosing a peeling
order in advance.  Any valid first deletion leaves a smaller set to which the
same quantified condition applies.

\begin{theorem}[Maximum-anchor A-LD certificate]\label{thm:max-anchor}
If a specified injective maximum-anchor map satisfies A-LD, then there is a
complete allocation that is RI, has maximum-singleton dominance, and therefore
is \(\EFXz\).  Given the anchors, the certificate can be verified and an
allocation constructed in \(O(nm^2)\) time.
\end{theorem}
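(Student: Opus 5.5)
The plan is to build a proper list coloring of \(C_2\) that extends the anchor precoloring, and then invoke Theorem~\ref{thm:ri-list} and Proposition~\ref{prop:ri-dominance}.

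\emph{The anchors are a valid partial coloring.} Each \(h_i\) receives color \(i\in S_{h_i}\), so this is an endpoint choice. Two anchors \(h_i\ne h_{i'}\) get distinct colors because the map is injective. Hence no edge of \(C_2\) among anchors is monochromatic.

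\emph{Extend by reverse greedy coloring.} A-LD applied to \(Y=V(C_2)\setminus H\) yields some \(g_1\) with \(d_{C_2[Y]}(g_1)<|L_H(g_1)|\). Delete \(g_1\) and reapply A-LD to the remainder, obtaining a peeling order \(g_1,\dots,g_r\) of all non-anchor goods. Then color in the order \(g_r,\dots,g_1\).

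\begin{itemize}
\item When \(g_k\) is colored, its already colored neighbours are of two kinds: anchors, and later-peeled goods \(g_\ell\) with \(\ell>k\).
\item Colors of adjacent anchors are excluded by definition of \(L_H(g_k)\).
\item The later-peeled neighbours number at most \(d_{C_2[\{g_k,\dots,g_r\}]}(g_k)<|L_H(g_k)|\), so some color of \(L_H(g_k)\subseteq S_{g_k}\) remains free.
\end{itemize}

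The result is a proper list coloring of \(C_2\) from the lists \(S_g\) that agrees with the anchors. By Theorem~\ref{thm:ri-list} it is an RI owner map \(o\). Give each positive-support good to \(o(g)\) and restore the zero-support goods arbitrarily.

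\emph{Dominance.} For \(i\in N^+\), the good \(h_i\in A_i\) gives \(v_i(A_i)\ge v_i(h_i)=M_i\) by nonnegativity. For inactive \(i\), \(M_i=0\) and the inequality is trivial. So the allocation has maximum-singleton dominance, and Proposition~\ref{prop:ri-dominance} yields \(\EFXz\).

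\emph{Running time.} The step that needs care is the \(O(nm^2)\) bound.

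\begin{itemize}
\item Supports are computed from the valuations in \(O(nm)\). Each anchor is checked for \(v_i(h_i)=M_i\) and for injectivity in \(O(nm)\).
\item \(C_2\) is built by intersecting supports of all \(O(m^2)\) pairs, each in \(O(n)\) time using indicator arrays. This gives \(O(nm^2)\).
\item Residual lists \(L_H(g)\) require checking adjacency of \(g\) to at most \(n\) anchors, in \(O(nm)\) total given an adjacency matrix.
\item Verifying A-LD does not require checking all subsets \(Y\). The deletion condition is monotone: removing vertices only decreases degrees, and the lists \(L_H\) are fixed. So a greedy peeling succeeds iff A-LD holds. If peeling gets stuck on a nonempty set \(Y\), that \(Y\) itself violates A-LD; if it finishes, every subset \(Y\) contains a vertex meeting the bound, namely its earliest-peeled member.
\item The peeling is implemented with degree counters and a queue of currently eligible vertices, in \(O(m^2)\).
\item The final coloring scans at most \(m\) neighbours and at most \(n\) colors per good, in \(O(m^2+nm)\).
\end{itemize}

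The total is \(O(nm^2)\), dominated by the construction of \(C_2\).
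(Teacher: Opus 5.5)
Your proof is correct and takes the paper's own route: precolor the anchors, peel via A-LD, reverse-greedy color from the residual lists $L_H(g)$, obtain dominance from owning $h_i$, and conclude with Theorem~\ref{thm:ri-list} and Proposition~\ref{prop:ri-dominance}; your running-time accounting, including the observation that any greedy peeling decides A-LD (the content of Proposition~\ref{prop:core}), just spells out the paper's one-line $O(nm^2)$ claim. One small precision: distinct anchor goods get distinct colors automatically because $h$ is a function, whereas injectivity is what makes the precoloring well defined and ensures each active agent actually owns her own anchor.
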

\begin{proof}
The anchors form a proper partial list coloring: their goods and colors are
both distinct.  Peel the residual graph using A-LD and color it in reverse
order from \(L_H(g)\).  The residual degree bound leaves a color, while the
definition of \(L_H(g)\) prevents conflicts with precolored anchors.  The
result is a proper support-list coloring, hence an RI owner map by
Theorem~\ref{thm:ri-list}.

Every active agent owns a maximum-valued anchor good, so nonnegativity gives
\[
v_i(A_i)\ge v_i(h_i)=M_i.
\]
An inactive agent has an all-zero valuation row, so the same inequality holds.
Thus the allocation has maximum-singleton dominance.  Restore zero-support
goods arbitrarily and apply Proposition~\ref{prop:ri-dominance}.

Constructing \(C_2\) naively takes \(O(nm^2)\); list construction, peeling, and
reverse coloring fit within this bound.
\end{proof}

\begin{remark}[Strictness and quantifiers]
The A-LD inequality must be strict.  The theorem asserts that a particular
maximum-anchor injection and its associated residual lists form a certificate.
It does not assert that every maximum matching works, or that a successful
anchor map can always be found in polynomial time.  RI is sufficient but not
necessary for EFX, maximum anchors are sufficient but not necessary for
maximum-singleton dominance, and A-LD is sufficient but not necessary for
extending a precoloring.  These are certificate layers, not structural
assumptions imposed on every valuation instance.
\end{remark}

\section{Recognition of maximum-anchor certificates}\label{sec:recognition}

For active agent \(i\), let
\[
T_i=\{g:v_i(g)=M_i>0\}
\]
be its maximum-good set.  The recognition problem asks whether there is an
injective map \(h_i\in T_i\) satisfying A-LD.

For complexity, we use the combinatorial input
\[
I=(N^+,G^+,(S_g)_{g\in G^+},(T_i)_{i\in N^+}),
\]
where every \(S_g\) and \(T_i\) is nonempty,
\(T_i\subseteq\{g:i\in S_g\}\), and \(C_2\) is derived from the supports.
Let \(|I|\) denote its binary encoding length.  This input is computable in
polynomial bit time from binary-encoded rational valuations.  The
\(O(nm^2)\) bounds in the coloring construction count combinatorial support
comparisons after the supports and maximum sets are available; the complete
rational-input algorithm remains polynomial in \(|I|\).

For fixed \(h\), write
\[
\ell_h(g)=|L_H(g)|.
\]
Repeatedly delete a residual vertex with current degree strictly below
\(\ell_h(g)\).  The terminal set is denoted \(K(h)\).

This is the standard monotone vertex-threshold peeling operation, also viewed
as a heterogeneous core.  For the orientation formulation, first reject if a
residual vertex has \(\ell_h(g)=0\).  On each connected component \(R\) of the
residual conflict graph, the normalization
\(\bar f_h(g)=\max\{0,d_R(g)-\ell_h(g)+1\}\) gives the complementary one-sided
degree-constrained acyclic-orientation criterion known in graph theory
\cite{BaxterEtAl2011,KiralyPalvolgyi2018}.  We record the exact specialization
needed for the support-derived anchor verifier.

\begin{proposition}[Exact fixed-anchor core]\label{prop:core}
The set \(K(h)\) is independent of the deletion order and is the unique maximum
residual set \(K\) satisfying
\[
d_{C_2[K]}(g)\ge\ell_h(g)\qquad(g\in K).
\]
Thus \(h\) satisfies A-LD if and only if \(K(h)=\varnothing\).
\end{proposition}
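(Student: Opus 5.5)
We argue by the standard core argument for monotone threshold peeling. Call a residual set \(K\subseteq V(C_2)\setminus H\) \emph{stable} if \(d_{C_2[K]}(g)\ge\ell_h(g)\) for every \(g\in K\). The empty set is stable, and a union of stable sets is stable: if \(g\in K_1\) with \(K_1\) stable, then \(d_{C_2[K_1\cup K_2]}(g)\ge d_{C_2[K_1]}(g)\ge\ell_h(g)\), because induced degree is monotone in the vertex set while \(\ell_h(g)\) depends only on \(g\) and the fixed anchors. Hence there is a unique maximum stable set \(K^\ast\), the union of all stable sets. The plan is to show \(K(h)=K^\ast\) for every deletion order. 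Order independence then follows at once, since \(K^\ast\) is defined without reference to any order.

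The first step is to show that peeling never deletes a vertex of \(K^\ast\). Suppose some vertex of \(K^\ast\) is deleted, and let \(g\) be the first one. At that moment the current set \(Y\) contains \(K^\ast\), so
\[
d_{C_2[Y]}(g)\ge d_{C_2[K^\ast]}(g)\ge\ell_h(g).
\]
This contradicts the deletion rule, which requires degree strictly below \(\ell_h(g)\). Hence \(K^\ast\subseteq K(h)\). The second step is to note that the terminal set \(K(h)\) admits no further deletion. That means every \(g\in K(h)\) has \(d_{C_2[K(h)]}(g)\ge\ell_h(g)\), so \(K(h)\) is itself stable and \(K(h)\subseteq K^\ast\). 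Together these give \(K(h)=K^\ast\), which proves the uniqueness and maximality claims.

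For the A-LD equivalence, A-LD says exactly that no nonempty residual \(Y\) is stable: every such \(Y\) has a vertex with \(d_{C_2[Y]}(g)<|L_H(g)|=\ell_h(g)\). If A-LD holds, then \(K^\ast\) is stable and therefore must be empty. Conversely, suppose \(K(h)=K^\ast=\varnothing\) and a nonempty \(Y\) violated A-LD. Then \(Y\) would be stable, so \(Y\subseteq K^\ast\), a contradiction.

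I expect no serious obstacle here. The only point needing care is that \(\ell_h\) is fixed once \(h\) is fixed and does not change during peeling, because residual lists are computed against anchors and not against the current set. This is what makes the union-closure and the ``first deleted core vertex'' arguments valid. It is also worth remarking that vertices with \(\ell_h(g)=0\) are never deletable, since their degree can never be strictly below \(0\). They therefore always lie in \(K^\ast\), which is consistent with A-LD failing for them.
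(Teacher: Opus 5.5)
Your proof is correct and follows the same route as the paper: a first-deleted-vertex argument shows that any set meeting the lower bounds survives every peeling order, and the terminal set itself meets the bounds, so it is the unique maximum such set regardless of order. Your union-closure step is redundant, since the terminal set already contains every stable set, and your explicit derivation of the A-LD equivalence spells out what the paper treats as an immediate consequence.
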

\begin{proof}
Any set satisfying the displayed lower bounds survives every deletion process:
before its first deleted vertex, all its vertices are still present, so none is
eligible.  Conversely, the terminal set itself satisfies the lower bounds.
It therefore contains every such set and is unique.
\end{proof}

\begin{theorem}[Fixed agents and tie-surplus]\label{thm:recognition}
Let \(a=|N^+|\) and
\[
p=\sum_{i\in N^+}(|T_i|-1).
\]
Maximum-anchor A-LD recognition and search can be solved in
\[
O\!\left(m^a\poly(|I|)\right)
\quad\text{and}\quad
O\!\left(2^p\poly(|I|)\right)
\]
time.  Hence it is polynomial for every fixed number of agents, XP in \(a\),
and FPT in \(p\).
\end{theorem}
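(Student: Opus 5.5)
The proof rests on the exact fixed-anchor verifier of Proposition~\ref{prop:core}. For a candidate injective map \(h\) with \(h_i\in T_i\), we first build \(C_2\) and the residual lists \(L_H(g)\). We then run threshold peeling and accept exactly when \(K(h)=\varnothing\). By Proposition~\ref{prop:core} this is correct and order-independent. A straightforward implementation (naive peeling with recomputed degrees, or a queue with degree counters) runs in \(\poly(|I|)\).

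Recognition and search therefore reduce to enumerating candidate anchor maps and running this verifier on each. When the verifier accepts, Theorem~\ref{thm:max-anchor} turns the accepted map into an allocation, which handles search within the same bounds.

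\textbf{The bound in \(a\).} The candidate maps are the tuples \((h_i)_{i\in N^+}\) with \(h_i\in T_i\subseteq G^+\). We discard tuples that are not injective. There are at most \(\prod_i|T_i|\le m^a\) such tuples, and each costs \(\poly(|I|)\) to check. This gives \(O(m^a\poly(|I|))\), hence XP in \(a\) and polynomial for each fixed number of active agents (so also for fixed \(n\)).

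\textbf{The bound in \(p\).} We use the elementary inequality \(\prod_i|T_i|\le\prod_i 2^{|T_i|-1}=2^{p}\), which holds because \(k\le 2^{k-1}\) for every integer \(k\ge1\). Every \(T_i\) is nonempty, so this applies to each factor. The same enumeration therefore has at most \(2^p\) branches, giving \(O(2^p\poly(|I|))\) and FPT in \(p\).

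The step needing the most care is the bookkeeping of the input model rather than any combinatorics. We must make sure that \(C_2\), the lists \(L_H\), and the peeling are all polynomial in \(|I|\) and not merely in \(n,m\). We also check that agents with \(|T_i|=1\) are forced and contribute no branching, which the product bound already accounts for. Finally, rejecting non-injective tuples (or tuples producing some \(\ell_h(g)=0\) vertex, which simply stays in \(K(h)\)) must not break exactness. It does not, because Proposition~\ref{prop:core} decides A-LD exactly for each fixed injective \(h\), and every admissible certificate appears among the enumerated tuples.
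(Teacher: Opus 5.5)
Your proposal is correct and follows the paper's proof essentially verbatim: you enumerate the Cartesian product of the \(T_i\), discard non-injective tuples, and run the exact fixed-anchor core test of Proposition~\ref{prop:core} on each. You then bound the number of branches by \(\prod_i|T_i|\le m^a\) and, via \(s\le 2^{s-1}\), by \(2^p\), exactly as the paper does; your extra remarks on input-size bookkeeping and on residual vertices with \(\ell_h(g)=0\) never being peeled are accurate details the paper leaves implicit.
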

\begin{proof}
Enumerate choices from the \(T_i\), discard noninjective maps, and apply
Proposition~\ref{prop:core}.  The first bound follows from
\(\prod_i|T_i|\le m^a\).  For the second, use
\(s\le2^{s-1}\) for every integer \(s\ge1\):
\[
\prod_i|T_i|\le2^{\sum_i(|T_i|-1)}=2^p.\qedhere
\]
\end{proof}

\begin{corollary}
If every active agent has a unique maximum good, recognition first checks
whether these goods are pairwise distinct; a collision gives NO, and otherwise
one fixed-anchor core test suffices.  More generally, if the
agent--maximum-good bipartite graph has a unique saturating matching,
recognition reduces to the core test for that matching.
\end{corollary}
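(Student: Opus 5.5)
The corollary has two parts, and I will derive both from Theorem~\ref{thm:recognition}'s enumeration scheme and Proposition~\ref{prop:core}, noting that the search space there is exactly the set of injective maps \(h\) with \(h_i\in T_i\).

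For the unique-maximum case, \(|T_i|=1\) for every active \(i\), so the product \(\prod_i|T_i|\) equals \(1\) and there is exactly one candidate map \(h\), namely \(h_i\) equal to the unique element of \(T_i\). I will check injectivity of this candidate by sorting or hashing the \(a\) chosen goods.
\begin{itemize}
\item If two agents share their unique maximum good, then no injective map with \(h_i\in T_i\) exists at all. Recognition correctly answers NO, since a certificate requires an injective anchor map by definition.
\item If the goods are pairwise distinct, the single candidate is accepted exactly when \(K(h)=\varnothing\), by Proposition~\ref{prop:core}. So one core computation, by peeling, decides the instance.
\end{itemize}

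For the general statement, I will observe that injective maps \(h\) with \(h_i\in T_i\) for all \(i\in N^+\) are precisely the matchings saturating \(N^+\) in the bipartite graph joining \(i\) to the goods of \(T_i\). If this saturating matching is unique, the candidate set is a singleton, and again Proposition~\ref{prop:core} shows that acceptance is equivalent to emptiness of its core.

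Algorithmically, one also needs to find the matching. I will do this by any polynomial bipartite matching algorithm, and the hypothesis guarantees that the matching found is the only one. If the problem requires certifying uniqueness rather than assuming it, I would add the standard check: a maximum matching is unique iff there is no alternating cycle and no even alternating path from a free vertex. I would remark on this check but not need it under the corollary's hypothesis.

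The only real subtlety is the quantifier in the corollary. I must confirm that ``recognition'' ranges over exactly the injective maps into the \(T_i\), so that a unique candidate makes the existential question collapse to a single fixed-anchor test. Everything else is immediate from the order-independence of the core.
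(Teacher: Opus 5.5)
Your proposal is correct and matches the paper's reasoning. The corollary follows from the enumeration in Theorem~\ref{thm:recognition}: its candidate set of injective maps with $h_i\in T_i$ (equivalently, $N^+$-saturating matchings in the agent--maximum-good graph) has at most one element under either hypothesis. Proposition~\ref{prop:core} then reduces acceptance to the single test $K(h)=\varnothing$, and your remark on certifying matching uniqueness is an optional extra not needed under the stated hypothesis.
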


\begin{theorem}[At most two active agents]\label{thm:two-agents}
Suppose \(a=|N^+|\le2\).  For \(a=2\), let \(u\) be the number of goods with
support \(N^+\).  A maximum-anchor A-LD certificate exists if and only if the
agent--maximum-good bipartite graph has a saturating matching and \(u\le2\).
For \(a\le1\), existence of the required top choice is sufficient.
\end{theorem}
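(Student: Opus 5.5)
The plan is to compute $C_2$ and the residual lists explicitly, which is possible because there are so few active agents, and then to check A-LD directly using Proposition~\ref{prop:core}. Two preliminary observations come first. An inactive agent has an identically zero valuation row, so every positive support satisfies $S_g\subseteq N^+$. An injective anchor map $h_i\in T_i$ is, by definition, exactly a saturating matching of the agent--maximum-good bipartite graph. So matching existence is necessary, and what remains is to show that, given any matching, A-LD holds precisely under the stated condition.

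For $a\le1$, every support has size at most one, so $C_2$ has no edges. Each residual good then has $L_H(g)=S_g$, which is nonempty, and degree $0<|L_H(g)|$. Hence A-LD holds for any anchor choice. The anchor choice exists exactly when the top choice does: $T_i\neq\varnothing$ for $a=1$, and the condition is vacuous for $a=0$.

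For $a=2$ with $N^+=\{1,2\}$, every support is $\{1\}$, $\{2\}$ or $\{1,2\}$. The universal goods, those with support $N^+$, pairwise meet in two agents, while any other pair of goods meets in at most one agent. So $C_2$ is a clique $U$ on the $u$ universal goods, and all other goods are isolated. Fix a matching $h$ and let $k\in\{0,1,2\}$ be the number of anchors lying in $U$. A non-universal residual good is isolated, so its residual list is its own support, of size $1>0$, and it is always peelable. A residual universal good is adjacent exactly to the $k$ universal anchors, so its residual list is $N^+$ minus those $k$ anchor agents, of size $2-k$. The residual conflict graph therefore consists of isolated peelable vertices together with a clique on $u-k$ vertices whose lists all have the common size $2-k$.

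The key step is the clique check. A clique on $r$ vertices with uniform list size $\ell$ peels completely if and only if $r=0$ or $r-1<\ell$. The forward direction is Proposition~\ref{prop:core}, since the whole clique satisfies the core lower bounds when $r-1\ge\ell$. For the converse, once the first vertex can be deleted, all later vertices can be deleted as well. Substituting $r=u-k$ and $\ell=2-k$, the condition $u-k-1<2-k$ becomes $u\le2$, and the case $r=0$ gives $u=k\le2$. In both cases the condition is $u\le 2$, independently of $k$ and hence of which matching is chosen. This independence of $k$ is the step I expect to need the most care, since it is what makes the criterion reduce to \emph{some} matching exists and $u\le2$. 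In particular, when $u\ge3$ no choice of anchors can help, and the sub-case $k=2$ with empty residual lists is already covered by $r-1\ge0=\ell$.
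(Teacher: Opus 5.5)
Your proposal is correct and follows the paper's proof essentially step for step. For $a\le1$, $C_2$ is edgeless. For $a=2$, it is a clique on the $u$ universal goods plus isolated vertices, and a matching with $k$ universal anchors leaves a residual clique of $u-k$ vertices with lists of size $2-k$, so peeling reduces to $u-k-1<2-k$, i.e.\ $u\le2$, independently of $k$. Your additional details fill in what the paper's short proof leaves implicit: that $S_g\subseteq N^+$, the clique-peeling argument via Proposition~\ref{prop:core}, and the $r=0$ and $k=2$ edge cases.
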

\begin{proof}
For \(a\le1\), \(C_2\) has no edges.  For \(a=2\), the \(u\) universal goods
form a clique and all support-one goods are isolated.  Let a top matching use
\(k\) universal anchors.  The residual universal clique has \(r=u-k\) vertices,
each with residual-list size \(2-k\), and it peels exactly when
\[
r-1<2-k,
\]
equivalently \(u\le2\).  Isolated support-one residual goods always peel.
\end{proof}

\begin{definition}[Color-safe top edge]
A candidate edge \((i,h)\), with \(h\in T_i\), is color-safe if for every
\(g\ne h\),
\[
i\in S_g\quad\Longrightarrow\quad gh\notin E(C_2).
\]
\end{definition}

\begin{theorem}[Safe-anchor matching]\label{thm:safe-anchor}
If the full support instance is list-degenerate and the bipartite graph
containing only color-safe top edges has a matching saturating \(N^+\), then
that matching is a maximum-anchor A-LD certificate.
\end{theorem}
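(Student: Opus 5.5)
The plan is to check directly that the matching meets every hypothesis of Theorem~\ref{thm:max-anchor}, namely that it is an injective maximum-anchor map and satisfies A-LD. The key observation is that color-safety makes every residual list equal to the full support list. Once the lists coincide, A-LD becomes a restriction of the assumed full list-degeneracy.

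\emph{Step 1: valid anchors.} Let the matching send each \(i\in N^+\) to \(h_i\). Because it is a matching saturating \(N^+\), the map \(i\mapsto h_i\) is injective. Each edge is a top edge, so \(h_i\in T_i\), which gives \(v_i(h_i)=M_i>0\). In particular \(i\in S_{h_i}\) and \(h_i\) is a positive-support good. Hence \(h\) is a specified injective maximum-anchor map in the sense used before Theorem~\ref{thm:max-anchor}, with \(H=\{h_i:i\in N^+\}\).

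\emph{Step 2: residual lists are not shrunk.} Fix \(g\notin H\) and an active agent \(i\). We show \(i\in S_g\) is never removed from \(L_H(g)\). Such an \(i\) would be removed only if \(gh_i\in E(C_2)\). Since \(g\notin H\), we have \(g\ne h_i\). Color-safety of the edge \((i,h_i)\) together with \(i\in S_g\) then gives \(gh_i\notin E(C_2)\), so \(i\) is not removed. Agents outside \(S_g\) were never in the list, so \(L_H(g)=S_g\) and \(|L_H(g)|=|S_g|\) for every residual good.

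\emph{Step 3: A-LD from list-degeneracy.} Let \(Y\subseteq V(C_2)\setminus H\) be nonempty. Then \(Y\) is also a nonempty subset of \(V(C_2)\). Full support-list degeneracy gives some \(g\in Y\) with \(d_{C_2[Y]}(g)<|S_g|=|L_H(g)|\). This is exactly the A-LD condition, so the matching is a maximum-anchor A-LD certificate. Theorem~\ref{thm:max-anchor} then yields the \(\EFXz\) allocation.

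There is no substantial obstacle. The one point needing care is Step 2: the removal condition in \(L_H(g)\) ranges over all anchors, and it must be checked that only anchors \(i\in S_g\) can affect \(L_H(g)\). Color-safety covers exactly those anchors. Adjacency between two anchors does not matter, since the anchors already carry pairwise distinct colors in the precoloring.
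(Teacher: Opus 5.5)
Your proposal is correct and follows essentially the same route as the paper's proof. Color-safety of each chosen edge gives \(L_H(g)=S_g\) for every residual good, and A-LD then follows because every residual induced subgraph is an induced subgraph of \(C_2\), where full list-degeneracy applies; your Step~1 check that the matching is a valid injective maximum-anchor map simply makes explicit what the paper leaves implicit.
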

\begin{proof}
For a residual good \(g\), color-safety of the chosen edge for every
\(i\in S_g\) implies that no supported color is deleted, so
\(L_H(g)=S_g\).  Every residual induced subgraph is also an induced subgraph of
the full instance and therefore contains a vertex of degree below
\(|S_g|=|L_H(g)|\).
\end{proof}

\begin{example}[A top matching need not pass A-LD]\label{ex:matching-not-enough}
Let
\[
S_a=\{1,2,3\},\quad S_b=S_c=\{1,2\},\quad
S_p=\{2\},\quad S_q=\{3\},
\]
and let \(T_1=\{a\},T_2=\{p\},T_3=\{q\}\).  The unique top matching leaves
adjacent residual goods \(b,c\) with lists \(\{2\},\{2\}\), so its generalized
core is nonempty, even though the unanchored support instance is
list-degenerate.
\end{example}

\begin{remark}
The general recognition problem is in nondeterministic polynomial time (NP).
We do not determine whether it is
polynomial-time solvable or NP-hard.  Example~\ref{ex:matching-not-enough}
shows that a top-good matching alone is not sufficient.
\end{remark}

\section{A fixed-support robustness witness: a restricted
\texorpdfstring{\(4\times10\) \(\EFXz\)}{4 x 10 EFX0} cone}

This construction is a robustness witness on one support face, not a proposed
natural valuation domain.  Its pair multiplicities make the two exceptional
owners and an RI completion transparent, while strict maximum inequalities
make the same anchor certificate valid on a relatively open set of valuations.

Let the ten goods be
\[
(x,y,a_{23},b_{23},a_{14},b_{14},a_{34},b_{34},p_{13},p_{24})
\]
with supports
\[
\begin{aligned}
S_x&=\{1,2,3\},& S_y&=\{1,2,4\},\\
S_{a_{23}}=S_{b_{23}}&=\{2,3\},&
S_{a_{14}}=S_{b_{14}}&=\{1,4\},\\
S_{a_{34}}=S_{b_{34}}&=\{3,4\},&
S_{p_{13}}&=\{1,3\},\quad S_{p_{24}}=\{2,4\}.
\end{aligned}
\tag{M-S}
\]
The exact unary pruning leaves \(C_x=\{1\}\) and \(C_y=\{2\}\).
One RI owner map is
\[
\begin{aligned}
x&\mapsto1,& y&\mapsto2,&
a_{23}&\mapsto2,&b_{23}&\mapsto3,\\
a_{14}&\mapsto1,&b_{14}&\mapsto4,&
a_{34}&\mapsto3,&b_{34}&\mapsto4,\\
p_{13}&\mapsto3,&p_{24}&\mapsto4.
\end{aligned}
\tag{M-O}
\]
It uses each of the twelve ordered observer--owner resources exactly once.

\begin{theorem}[The \(M\)-cone]\label{thm:mcone}
On the fixed support face \((M\text{-}S)\), impose
\[
\begin{aligned}
v_1(x)&>v_1(g) &&(g\ne x),&
v_2(y)&>v_2(g) &&(g\ne y),\\
v_3(b_{23})&>v_3(g) &&(g\ne b_{23}),&
v_4(b_{14})&>v_4(g) &&(g\ne b_{14}),
\end{aligned}
\]
where comparisons only range over the positive support of each row.
Every valuation in this cone has a complete \(\EFXz\) allocation.  The cone is
nonempty and full-dimensional relative to the \(22\)-dimensional support face.
\end{theorem}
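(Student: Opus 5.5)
The plan is to apply Theorem~\ref{thm:max-anchor} with the anchor map \(h_1=x\), \(h_2=y\), \(h_3=b_{23}\), \(h_4=b_{14}\). The strict inequalities defining the cone make each of these the unique maximum of its row. Hence \(v_i(h_i)=M_i>0\) and \(i\in S_{h_i}\). All four agents are active, and the anchors are distinct, so the map is an injective maximum-anchor map. Its validity depends only on the support face, not on the particular valuation in the cone. So it remains to verify A-LD once, combinatorially, on \((M\text{-}S)\).

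\textbf{Verifying A-LD.} First compute \(C_2\) on the ten goods; two goods are adjacent when their supports share at least two agents.
\begin{itemize}
\item \(x\) is adjacent to \(y\) (shared \(\{1,2\}\)), \(a_{23},b_{23}\) (shared \(\{2,3\}\)) and \(p_{13}\) (shared \(\{1,3\}\)).
\item \(y\) is adjacent to \(x\), \(a_{14},b_{14}\) (shared \(\{1,4\}\)) and \(p_{24}\) (shared \(\{2,4\}\)).
\item Pair goods are adjacent only to their twin and to whichever of \(x,y\) contains their support.
\end{itemize}
Next compute the residual lists \(L_H\):
\begin{itemize}
\item \(L_H(a_{23})=\{2,3\}\setminus\{1,3\}=\{2\}\), since \(a_{23}\) is adjacent to the anchors \(x\) (color 1) and \(b_{23}\) (color 3).
\item \(L_H(a_{14})=\{1,4\}\setminus\{2,4\}=\{1\}\).
\item \(L_H(p_{13})=\{1,3\}\setminus\{1\}=\{3\}\).
\item \(L_H(p_{24})=\{2,4\}\setminus\{2\}=\{4\}\).
\item \(a_{34}\) and \(b_{34}\) are adjacent to no anchor, so both keep \(\{3,4\}\).
\end{itemize}
The residual graph on \(\{a_{23},a_{14},a_{34},b_{34},p_{13},p_{24}\}\) has exactly one edge, \(a_{34}b_{34}\). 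Every isolated vertex has degree \(0<1\le|L_H|\), and \(a_{34},b_{34}\) have degree \(1<2\). Thus every nonempty subset contains a peelable vertex, which is A-LD; equivalently, \(K(h)=\varnothing\) by Proposition~\ref{prop:core}. Theorem~\ref{thm:max-anchor} then gives the complete \(\EFXz\) allocation. The resulting coloring agrees with (M-O) up to the choice on \(a_{34},b_{34}\).

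\textbf{Nonemptiness and full dimension.} The face is coordinatized by the positive entries. The support incidences total \(3+3+2\cdot8=22\), so the face is an open orthant of dimension 22. The cone is cut out by finitely many strict linear inequalities, so it is relatively open. It is therefore full-dimensional once it is nonempty. For a witness, set every positive entry to \(1\) except \(v_1(x)=v_2(y)=v_3(b_{23})=v_4(b_{14})=2\). These four goods lie in the respective supports: \(1\in S_x\), \(2\in S_y\), \(3\in S_{b_{23}}\), \(4\in S_{b_{14}}\). So each strict inequality holds.

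\textbf{Main obstacle.} The only real work is computing \(C_2\) and the residual lists without error, particularly catching every anchor adjacency that trims a list. The step to double-check is \(a_{34},b_{34}\): they must keep both colors, since otherwise the twin pair would form a nonempty core. They do, because neither \(x\) nor \(y\) contains both 3 and 4, and the anchors \(b_{23}\) and \(b_{14}\) each meet \(\{3,4\}\) in only one agent.
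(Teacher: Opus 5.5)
Your proof is correct and follows the paper's argument: the same anchors \(h_1=x,h_2=y,h_3=b_{23},h_4=b_{14}\), the same residual lists with the single residual edge \(a_{34}b_{34}\), an application of Theorem~\ref{thm:max-anchor}, and the same \(22\)-coordinate count with an explicit interior point. Your value \(2\) on the anchor coordinates plays the role of the paper's value \(4\), and your explicit computation of \(C_2\) spells out adjacencies that the paper leaves implicit.
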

\begin{proof}
Use anchors \(h_1=x,h_2=y,h_3=b_{23},h_4=b_{14}\).
The residual lists are
\[
\begin{array}{c|c}
a_{23}&\{2\}\\
a_{14}&\{1\}\\
a_{34},b_{34}&\{3,4\}\\
p_{13}&\{3\}\\
p_{24}&\{4\}.
\end{array}
\]
The residual conflict graph has only the edge \(a_{34}b_{34}\), so A-LD holds.
Theorem~\ref{thm:max-anchor} proves \(\EFXz\).

The support face contains
\(3+3+6\cdot2+2\cdot2=22\) positive coordinates.  The displayed strict
homogeneous inequalities define a relatively open cone and do not lower its
dimension.  For example, assigning value \(4\) to each anchor coordinate and
\(1\) to every other positive coordinate gives an interior point.
\end{proof}

\paragraph{Separation from the published comparison predicates.}
The pattern contains two support-three goods and is not a pure multigraph.  A
six-vertex induced subgraph has degrees equal to list sizes, proving failure of
unanchored support-list degeneracy.  The all-padding argument in
Appendix~\ref{app:separation} shows that it also fails the explicit published
high-girth/controlled-multiplicity predicate of Lianeas et al.
\cite{LianeasEtAl2026}.  These are separations from
displayed sufficient hypotheses, not a claim that the cone lies outside every
known EFX class.

\section{An arbitrary-\texorpdfstring{\(n\)}{n} exclusive-or (XOR)
stress-test family}

The second construction is a symmetric stress test for scalability and
ordered-resource saturation.  Its three types of goods have deliberately
different roles, summarized below.

For \(n\ge4\), define \(\mathcal U_n\) on agents \(\{1,\ldots,n\}\):
\begin{itemize}[leftmargin=2em]
\item two goods \(x,y\) have support \(N\);
\item for every \(i=3,\ldots,n\), there is one pair good \(p_{1i}\) with
support \(\{1,i\}\) and one pair good \(p_{2i}\) with support \(\{2,i\}\);
\item for every \(3\le i<j\le n\), there are two goods with support
\(\{i,j\}\);
\item there is no pair good on \(\{1,2\}\).
\end{itemize}

\begin{center}
\begin{tabular}{p{0.22\linewidth}p{0.31\linewidth}p{0.37\linewidth}}
\toprule
Good type & Combinatorial role & Effect in the RI certificate \\
\midrule
Universal goods \(x,y\) & Create two competing high-support owner choices &
Force the two surviving owners to be used in opposite order \\
Cross pairs \(p_{1i},p_{2i}\) & Connect owners \(1,2\) to each
\(i\ge3\) & Supply the reverse ordered resources after the universal-owner
choice \\
Doubled internal pairs & Place two goods on every \(\{i,j\}\subseteq
\{3,\ldots,n\}\) & Prune unwanted universal owners and saturate both internal
directions \\
\bottomrule
\end{tabular}
\end{center}

The number of goods is
\[
m=2+2(n-2)+2\binom{n-2}{2}=(n-1)(n-2)+2.
\]

\begin{theorem}[XOR ownership and saturated RI]\label{thm:xor}
The surviving owner sets of \(x\) and \(y\) are both \(\{1,2\}\).  Their
feasible owner pairs are exactly
\[
(o(x),o(y))=(1,2)\quad\text{or}\quad(2,1).
\]
Either choice extends to an RI owner map that uses every one of the
\(n(n-1)\) ordered observer--owner resources exactly once.
\end{theorem}
\begin{proof}
Candidates \(1,2\) survive because all incident pair multiplicities are at most
one.  Any candidate \(k\ge3\) is incident to a doubled pair
\(\{k,\ell\}\) for some other \(\ell\ge3\), so exact unary pruning deletes it.
The two universal goods cannot share an owner because that would load every
corresponding ordered resource twice.

For \(o(x)=1,o(y)=2\), assign \(p_{1i}\) and \(p_{2i}\) to \(i\), and split the
two goods on every internal pair \(\{i,j\}\) between \(i\) and \(j\).
The universal goods provide directions into owners \(1,2\), the \(p\)-goods
provide the reverse directions, and the doubled internal pairs provide both
directions inside \(\{3,\ldots,n\}\).  These categories partition all ordered
pairs.  The case \(o(x)=2,o(y)=1\) is symmetric.
\end{proof}

\begin{theorem}[Relative-open \(\EFXz\) cone for \(\mathcal U_n\)]
\label{thm:ucone}
Choose anchors
\[
h_1=x,\qquad h_2=y,\qquad h_i=p_{1i}\quad(3\le i\le n).
\]
All valuations on the fixed support face for which these goods are strict
rowwise maxima admit a complete \(\EFXz\) allocation.  This is a nonempty
relatively open cone of dimension
\[
2n^2-4n+4.
\]
\end{theorem}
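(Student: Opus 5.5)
The plan is to apply Theorem~\ref{thm:max-anchor} directly to the stated anchor map, as in the proof of Theorem~\ref{thm:mcone}, and then count coordinates on the support face. First I check that the anchors form a valid injective maximum-anchor map. The goods \(x,y,p_{13},\dots,p_{1n}\) are pairwise distinct, and \(1,2\in S_x=S_y=N\) and \(i\in S_{p_{1i}}=\{1,i\}\). Every agent has a positive coordinate on the face, so \(N^+=N\). The strict rowwise-maximum hypotheses give \(v_i(h_i)=M_i\); indeed \(T_i=\{h_i\}\).

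Next I determine \(C_2\) near the anchors and compute the residual lists \(L_H\).
\begin{itemize}
\item Since \(S_x=S_y=N\), both \(x\) and \(y\) are adjacent to each other and to every pair good.
\item An anchor \(p_{1i}\) meets another pair good in at least two agents only if that good also has support \(\{1,i\}\). There is no such good, so \(p_{1i}\) is adjacent only to \(x\) and \(y\).
\item For a residual cross pair \(p_{2i}\) with \(S=\{2,i\}\), adjacency to \(h_2=y\) deletes color \(2\). Adjacency to \(x\) deletes color \(1\notin S\). It is not adjacent to \(h_i=p_{1i}\), since their supports meet only in \(i\). Hence \(L_H(p_{2i})=\{i\}\).
\item For a residual internal good on \(\{i,j\}\subseteq\{3,\dots,n\}\), the only deleted colors are \(1\) and \(2\), which lie outside the support. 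Its anchors \(p_{1i},p_{1j}\) are not adjacent to it, so its list stays \(\{i,j\}\).
\end{itemize}

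The key step, and the one needing care, is the residual conflict graph \(C_2\setminus H\). Two residual pair goods are adjacent exactly when they have identical supports. The \(p_{2i}\) have pairwise distinct supports, and none coincides with an internal support, so each \(p_{2i}\) is isolated. The internal goods form disjoint edges, one per doubled pair \(\{i,j\}\). Every residual component is therefore a single vertex with list size \(\ge1\) or an edge with list sizes \(2\). Any nonempty \(Y\) contains a vertex of degree \(\le1\) in \(C_2[Y]\). That vertex is either isolated, where \(0<|L_H|\), or an internal good, where \(1<2\). So A-LD holds, and Theorem~\ref{thm:max-anchor} yields a complete \(\EFXz\) allocation. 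This residual coloring also matches the first assignment in Theorem~\ref{thm:xor}.

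Finally I count the positive coordinates for the dimension:
\begin{itemize}
\item \(x,y\) contribute \(2n\);
\item the cross pairs contribute \(2\cdot2(n-2)\);
\item the internal pairs contribute \(2\cdot2\binom{n-2}{2}=2(n-2)(n-3)\).
\end{itemize}
The sum is \(2n+4n-8+2n^2-10n+12=2n^2-4n+4\). The strict homogeneous inequalities define a relatively open cone, which therefore has full relative dimension. It is nonempty: setting each anchor coordinate to \(4\) and every other positive coordinate to \(1\) gives an interior point, exactly as in Theorem~\ref{thm:mcone}.
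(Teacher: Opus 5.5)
Your proposal is correct and follows the paper's proof. Like the paper, you apply Theorem~\ref{thm:max-anchor} after computing the residual lists (\(\{i\}\) for \(p_{2i}\) and \(\{i,j\}\) for the internal goods) and observing that the residual conflict graph is isolated vertices plus disjoint edges; you then count the same \(2n^2-4n+4\) coordinates and use the same \(4\)/\(1\) interior point. Your explicit checks of anchor injectivity, \(N^+=N\), and non-adjacency of the \(p_{1i}\) anchors only spell out details the paper leaves implicit.
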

\begin{proof}
Every residual \(p_{2i}\) has list \(\{i\}\).  Each doubled internal pair keeps
list \(\{i,j\}\) and induces one isolated edge; there are no other residual
conflicts.  Hence A-LD holds, and Theorem~\ref{thm:max-anchor} applies.

The two universal goods contribute \(2n\) positive coordinates.  The other
\((n-1)(n-2)\) goods have support size two, so the support face has
\[
2n+2(n-1)(n-2)=2n^2-4n+4
\]
coordinates.  Assigning value \(4\) to every anchor coordinate and \(1\) to
every other positive coordinate proves nonemptiness.
\end{proof}

The conflict graph has
\[
|E(C_2)|
=1+2(m-2)+\binom{n-2}{2}
=\frac{5n^2-17n+16}{2}
=\Theta(m).
\]
Thus the family has two high-degree overlap hubs and unboundedly many short
overlaps, but it is not asymptotically dense in the sense
\(|E(C_2)|=\Theta(m^2)\).  Its full vertex set witnesses failure of unanchored
list degeneracy, and its incidence graph is 2-connected.  It is not a pure
multigraph.
Note also that \(\mathcal U_4\) has only eight goods;
the four-agent ten-good separator is Theorem~\ref{thm:mcone}, not
\(\mathcal U_4\).

\section{Related work and separation from prior predicates}

\paragraph{General additive and valuation-restricted results.}
Complete all-good EFX is known for three additive agents and any number of
goods \cite{ChaudhuryGargMehlhorn2024}.  For four additive agents, the most
direct small-goods neighbor is the recent theorem for at most nine goods
\cite{AlkassarFouzMehlhorn2026}.  Several valuation restrictions also yield
complete all-good EFX for arbitrary numbers of agents and goods, including a
common ranking of additive goods, personalized bivalued additive rows, at most
three additive valuation types, and at most two item types
\cite{PlautRoughgarden2020,ByrkaEtAl2026,PrakashEtAl2025,GorantlaEtAl2023}.
Broader monotone valuations have a different current landscape: counterexamples
already occur within the submodular class, while the corresponding additive
existence question remains open \cite{AkramiEtAlCounterexample2026}.
These results constrain input axes different from our support-overlap
condition.  Our \(M\)-cone crosses the four-agent goods-count
boundary only under a fixed support pattern and strict maximum conditions; it
is not an unrestricted \(4\times10\) theorem.  Moreover, \(\mathcal U_4\) has
only eight goods and is covered by the small-goods theorem.  The count
\(m=(n-1)(n-2)+2\) exceeds nine only from \(n=5\), when the number of agents has
also changed.

\paragraph{Graphs and multigraphs.}
Graphical EFX allocation admits existence results even when EFX orientations
need not exist \cite{ChristodoulouEtAl2023}.  Subsequent work covers broad
multigraph classes \cite{SgouritsaSotiriou2025,BhaskarPandit2025}, and recent
work gives complete EFX for arbitrary multigraphs under cancelable valuations
\cite{AfshinmehrEtAl2026}.  The latter theorem already implies existence for
every additive pure-pair instance in our framework.  RI is a more restrictive
certificate and algorithmic encoding; the support-derived machinery developed
here is aimed instead at incorporating genuine high-support goods.
On bipartite multigraphs, Afshinmehr et al. also give broad allocation results
and a parameter-based characterization of EFX orientations
\cite{AfshinmehrEtAlJAAMAS2026}.  Those orientation results concern EFX itself;
our pair-capacity and CSP exactness concern the stricter RI completion
certificate after exceptional owners are fixed.

\paragraph{Orientations and parameterized complexity.}
The orientation problem requires goods to be assigned to incident agents and
asks whether the resulting orientation itself is fair.  This differs from
unrestricted allocation and from recognition of our sufficient RI certificate.
The literature gives envy-free-up-to-one-good (EF1) orientation algorithms,
hardness, and parameterized results for EFX orientations on graphs and
multigraphs
\cite{DeligkasEtAl2025,BlazejEtAl2025,KanellopoulosEtAl2025}.  The binary
symmetric orientation algorithm of Bla\v{z}ej et al. also uses a two-state
2-SAT encoding and a tree notion called a core; its states are component roots
and its core is a leaf-reduced tree, not our exceptional-good owner CSP or
fixed-anchor threshold core.  Those results make it especially
important not to reinterpret our fixed-anchor core test or tie-surplus
enumeration as a classification of general EFX orientation or allocation.
At the graph-theoretic level, our fixed-anchor core test is a specialization of
standard heterogeneous-threshold peeling and the one-sided
degree-constrained-acyclic-orientation criterion
\cite{BaxterEtAl2011,KiralyPalvolgyi2018}; the additional problem here is to
select injective maximum anchors whose owner-labelled deletions make that test
succeed.  Degree-constrained acyclic orientations with general allowed-degree
sets form a broader neighboring problem \cite{GarvardtEtAl2023}.

\paragraph{Hypergraphs and approximate relaxations.}
High-girth hypergraph theorems allow general monotone valuations
\cite{LianeasEtAl2026}, a valuation class broader than ours.  Our displayed
families instead allow short Berge overlaps that fail the explicit
high-girth/controlled-multiplicity hypotheses, at the cost of additivity and
maximum-anchor conditions.  Because those models may include endpoints that
value a good at zero, the comparison uses the all-padding convention in
Definition~\ref{def:zero-padding}.  Almost-EFX and approximate-EFX results trade
exact fairness for
broader hypergraph coverage \cite{KakatelisEtAl2026}; they address a neighboring
but different objective.

In the proofs of Lemmas 3.2 and 3.5, Lianeas et al. preserve EFX by using
pairwise edge uniqueness to ensure that an observer sees at most one relevant
edge, together with their Property 3, which makes the observer's current bundle
dominate that unallocated singleton \cite{LianeasEtAl2026}.  This is the same
local one-visible-good plus dominance motif isolated by
Proposition~\ref{prop:ri-dominance}, although their theorem derives sparsity
from high girth and uses a different dynamic construction.  We make no
independent novelty claim for that elementary fairness step; our formulation
uses it to organize a static RI certificate and separates it from the
maximum-anchor machinery developed here.

The first alternative in Barman et al.'s definition of EFL is that a rival
bundle contains at most one good positively valued by the observer
\cite{BarmanEtAl2018}.  This is a sparse-bundle condition inside the EFL
definition, not an RI definition or an equivalent form of our proposition.
Under maximum-singleton
dominance, RI in fact bounds every rival-bundle value by the observer's own
bundle value; Proposition~\ref{prop:ri-dominance} records only the EFX0
consequence needed for the certificate pipeline.  Accordingly, we use that
proposition as an expository bridge and make no independent novelty claim for
it.

\paragraph{Favorite goods, coloring, and terminology.}
Markakis and Santorinaios obtain \(2/3\)-EFX from distinct favorite goods and
exact EFX from pairwise distinct favorite tiers of size
\(\lfloor m/n\rfloor\) \cite{MarkakisSantorinaios2023}.  Our condition uses one
maximum-valued anchor per active agent together with residual support-list
degeneracy, so its hypotheses and construction are different.  Fair allocation
with exogenous item-conflict graphs already connects allocation and coloring in
another model \cite{ChiarelliEtAl2023}.  Our graph is instead derived from support
overlap, its proper list colorings are exactly RI owner maps, and maximum
anchors convert that mechanism to \(\EFXz\).

Definition names also differ between sources.  For example, the common-ranking
and three-agent results quantify over every removed good, whereas Amanatidis
et al. explicitly distinguish positive-good EFX from all-good EFX0
\cite{PlautRoughgarden2020,ChaudhuryGargMehlhorn2024,AmanatidisEtAl2021}.
Every comparison above is made at the level of the displayed inequality.  We do
not identify two results merely because both use the label ``EFX.''

\section{Limitations and open problems}

\begin{enumerate}[leftmargin=2em]
\item RI is sufficient but not necessary for either \(\EFXp\) or \(\EFXz\).
\item Maximum-singleton dominance is a sufficient value bound within the RI
bridge, not a necessary condition for \(\EFXz\).  Owning a maximum anchor is in
turn only one way to obtain that dominance, and A-LD is only one way to extend
the anchor precoloring to RI.
\item The exact capacity CSP characterizes RI completion, not general
\(\EFXz\) existence.
\item We do not resolve the general complexity of finding a maximum-anchor map
satisfying A-LD.  Our primary-source search through 28 August 2026 found no
exact prior classification of this anchor-selection problem; forward-citation
indexing for the most recent sources remains incomplete.
Fixed-\(n\) polynomiality and tie-surplus fixed-parameter tractability do not
settle this question.
\item The \(M\)-cone is full-dimensional only in its \(22\)-dimensional fixed
support face, not in the ambient \(40\)-dimensional valuation space.
\item The family \(\mathcal U_n\) has only two high-support goods and
\(\Theta(m)\), rather than \(\Theta(m^2)\), conflict edges.
\item We do not solve unrestricted \(4\times10\), unrestricted four-agent
all-good \(\EFXz\), or general additive all-good \(\EFXz\).
\item Our literature search found no earlier theorem with the same
maximum-anchor plus residual-list selection problem, but a search cannot
establish absolute priority.  In particular, recent orientation work and its
forward citations remain incomplete sources of evidence.
\end{enumerate}

Natural next steps are a general complexity classification for maximum-anchor
recognition, a wider precoloring-extension condition than A-LD, and families
whose number of high-support goods grows with \(n\).

\section{Conclusion}

Rival injectivity gives an exact support-derived representation as a
support-list coloring problem.  Pair-capacity elimination and the owner CSP
make the exceptional part algorithmically explicit, while maximum-anchor A-LD
certificates construct RI allocations with the dominance needed for complete
all-good \(\EFXz\).  The elementary RI--dominance implication separates the
fairness check from this construction.  Recognition results expose the
algorithmic boundary of the certificate class, and the finite and scalable
witness families show robustness on support patterns beyond the explicit
pure-multigraph and high-girth/controlled-multiplicity hypotheses compared
here; they do not solve the unrestricted problem.  The remaining challenge is
to classify recognition or find wider extension conditions without weakening
the all-good \(\EFXz\) guarantee.

\appendix

\section{Separation details}\label{app:separation}

For this appendix, the \emph{positive-support incidence graph} is the bipartite
graph on agents and positive-support goods with edge \(ig\) when \(i\in S_g\).
The auxiliary \emph{Block--Hall predicate} requires every nontrivial block of
this incidence graph to have a matching saturating all of its good vertices.
We include it only as a secondary benchmark for the two constructions; it is
not a frontier condition from the cited literature.

\subsection{The \texorpdfstring{\(4\times10\)}{4 x 10} support}

In the six-vertex induced subgraph
\[
\{x,y,a_{23},b_{23},a_{14},b_{14}\},
\]
the two exceptional vertices have degree \(3\), equal to their support-list
size, and the four pair vertices have degree \(2\), again equal to their list
size.  Hence unanchored list degeneracy fails.

The incidence graph has an ear decomposition starting from
\[
1-x-2-y-1,
\]
then adding the ears \(x-3-a_{23}-2\) and \(y-4-a_{14}-1\), followed by
length-two ears for the remaining pair goods.  It is therefore 2-connected and
forms a single block.  Ten good vertices cannot be saturated by four agent
vertices, so the Block--Hall predicate fails.

For zero-endpoint padding, \(x\) and both support-\(\{2,3\}\) goods must receive
the same patch \(P\) to avoid a Berge 2-cycle; similarly \(y\) and both
support-\(\{1,4\}\) goods must receive the same patch \(Q\).  Since \(P,Q\)
share agents \(1,2\), avoiding another Berge 2-cycle forces \(P=Q=N\), with
multiplicity at least six.  This violates the published distinguished-vertex
bound \(6\le |P|-2=2\).  Repeated patches also rule out the simple-hypergraph
branch.

\subsection{The universal family}

The whole conflict graph witnesses failure of unanchored list degeneracy.
Every \(p_{1i},p_{2i}\) has
degree \(2=|S_g|\); every internal doubled-pair good has degree
\(3>|S_g|=2\); and each universal good has degree \(m-1\ge n=|S_g|\).

The incidence graph is 2-connected: after deletion of either universal good,
the other connects all agent vertices, and deletion of any other single vertex
does not disconnect the graph.  Since this is one block and \(m>n\), the
Block--Hall saturation condition fails.

Under zero-endpoint padding, the two universal goods already have patch \(N\).
Any padded pair patch intersects \(N\) in at least two agents; avoiding a Berge
2-cycle therefore forces it also to be \(N\).  All \(m\) goods then share the
same patch, while the cited multiplicity condition would require
\(m\le n-2\), a contradiction.  The underlying simple hypergraph after full
padding may itself have infinite girth; the failure is the parallel
multiplicity bound, not the underlying girth.

\section{Finite verification and provenance}

The general theorems above have symbolic proofs and do not depend on
computation.  The accompanying verification scripts check finite witness
arithmetic:
\begin{itemize}[leftmargin=2em]
\item the \(4\times10\) capacity/cone certificate checks all nine fixed
exceptional-owner choices, the stated RI allocation, A-LD, \(\EFXz\) inequalities,
incidence connectivity, and all \(262{,}144\) zero-endpoint paddings;
\item the \(\mathcal U_n\) script checks identities and explicit allocations for
\(n=4,\ldots,8\), only as finite sanity checks; the all-\(n\) statements rest on
Theorems~\ref{thm:xor} and~\ref{thm:ucone}.
\end{itemize}
No finite no-hit is used as a proof of an all-\(n\) statement.

\section*{Human direction, AI contribution, and responsibility}

\paragraph{Human direction.}
The author selected the EFX research problem and its initial finite target,
supplied the formal task specification and research constraints, and directed
the project through explicit choices of positive and negative research
branches, verification standards, computational-resource limits, stopping and
continuation decisions, and publication goals.  After the initial unrestricted
counterexample search, the author redirected the project toward a publishable
restricted-support theorem, selected RI/capacity/maximum-anchor and recognition
as the contribution center, and retained unrestricted \(4\times10\) EFX as an
open branch.  The author chose the final
contribution package; required conservative separation of symbolic theorems,
finite evidence, conjectures, and unresolved priority; supplied and adjudicated
human-reader feedback; and retains final authority and responsibility for the
manuscript and its release.

\paragraph{AI contribution.}
OpenAI ChatGPT and Codex agent workflows were used during August 2026.  The
agent workflow generated and developed the paper's main structural route:
rival-injective ownership and support-list coloring, pair-capacity elimination
and the exceptional-owner CSP/2-SAT formulation, maximum-anchor residual-A-LD
certificates, recognition subcases, and the displayed \(4\times10\) cone and
the family \(\mathcal U_n\).  It also drafted and revised proof arguments;
implemented exact checkers and finite-certificate scripts; conducted structured
searches and counterexample attempts; used separate agent instances for proof
reconstruction and adversarial review; audited primary-source literature; and
prepared the manuscript, \LaTeX, and typesetting.

\paragraph{Verification and responsibility.}
AI outputs were treated as candidate research material rather than as
mathematical or bibliographic authority.  Symbolic claims retained in this
paper were checked through separate agent reconstruction and/or scoped
adversarial review, according to the claim;
finite computations support only the scoped witness and consistency checks
identified above and do not replace the general proofs; and candidate citations
and priority comparisons were checked against primary sources.  The author
selected the final content and takes responsibility for the paper's accuracy,
originality, and integrity.  At the time of this pre-submission version,
independent human domain-expert review is being sought and should not be inferred
from the agent reviews described here.


\begin{thebibliography}{99}

\bibitem{BarmanEtAl2018}
S.~Barman, A.~Biswas, S.~K.~Krishnamurthy, and Y.~Narahari.
\newblock Groupwise maximin fair allocation of indivisible goods.
\newblock In \emph{Proceedings of the Thirty-Second AAAI Conference on
Artificial Intelligence}, pages 917--924, 2018.
\newblock \url{https://doi.org/10.1609/aaai.v32i1.11463}.

\bibitem{BaxterEtAl2011}
G.~J.~Baxter, S.~N.~Dorogovtsev, A.~V.~Goltsev, and J.~F.~F.~Mendes.
\newblock Heterogeneous k-core versus bootstrap percolation on complex
networks.
\newblock \emph{Physical Review E}, 83:051134, 2011.
\newblock \url{https://doi.org/10.1103/PhysRevE.83.051134}.

\bibitem{KiralyPalvolgyi2018}
Z.~Kir\'aly and D.~P\'alv\"olgyi.
\newblock Acyclic orientations with degree constraints.
\newblock arXiv:1806.03426, 2018.
\newblock \url{https://arxiv.org/abs/1806.03426}.

\bibitem{GarvardtEtAl2023}
J.~Garvardt, M.~Renken, J.~Schestag, and M.~Weller.
\newblock Finding degree-constrained acyclic orientations.
\newblock In \emph{IPEC}, LIPIcs 285, 19:1--19:14, 2023.
\newblock \url{https://doi.org/10.4230/LIPIcs.IPEC.2023.19}.

\bibitem{ChaudhuryGargMehlhorn2024}
B.~R.~Chaudhury, J.~Garg, and K.~Mehlhorn.
\newblock EFX exists for three agents.
\newblock \emph{Journal of the ACM}, 71(1), Article 4, 2024.
\newblock \url{https://doi.org/10.1145/3616009}.

\bibitem{PlautRoughgarden2020}
B.~Plaut and T.~Roughgarden.
\newblock Almost envy-freeness with general valuations.
\newblock \emph{SIAM Journal on Discrete Mathematics}, 34(2):1039--1068,
2020.
\newblock \url{https://doi.org/10.1137/19M124397X}.

\bibitem{ByrkaEtAl2026}
J.~Byrka, F.~Malinka, and T.~Ponitka.
\newblock Probing EFX via PMMS: (Non-)existence results in discrete fair
division.
\newblock In \emph{AAAI}, 40(20):16735--16742, 2026.
\newblock \url{https://doi.org/10.1609/aaai.v40i20.38716}.

\bibitem{PrakashEtAl2025}
V.~Prakash H.V., P.~Ghosal, P.~Nimbhorkar, and N.~Varma.
\newblock EFX exists for three types of agents.
\newblock In \emph{Proceedings of the 26th ACM Conference on Economics and
Computation}, pages 101--128, 2025.
\newblock \url{https://doi.org/10.1145/3736252.3742509}.

\bibitem{GorantlaEtAl2023}
P.~Gorantla, K.~Marwaha, and S.~Velusamy.
\newblock Fair allocation of a multiset of indivisible items.
\newblock In \emph{Proceedings of the 2023 ACM--SIAM Symposium on Discrete
Algorithms}, pages 304--331, 2023.
\newblock \url{https://doi.org/10.1137/1.9781611977554.ch13}.

\bibitem{AmanatidisEtAl2021}
G.~Amanatidis, G.~Birmpas, A.~Filos-Ratsikas, A.~Hollender, and
A.~A.~Voudouris.
\newblock Maximum Nash welfare and other stories about EFX.
\newblock \emph{Theoretical Computer Science}, 863:69--85, 2021.
\newblock \url{https://doi.org/10.1016/j.tcs.2021.02.020}.

\bibitem{ChristodoulouEtAl2023}
G.~Christodoulou, A.~Fiat, E.~Koutsoupias, and A.~Sgouritsa.
\newblock Fair allocation in graphs.
\newblock In \emph{Proceedings of the 24th ACM Conference on Economics and
Computation}, pages 473--488, 2023.
\newblock \url{https://doi.org/10.1145/3580507.3597764}.

\bibitem{BhaskarPandit2025}
U.~Bhaskar and Y.~Pandit.
\newblock Extending EFX allocations to further multi-graph classes.
\newblock In \emph{FSTTCS}, LIPIcs 360, 15:1--15:18, 2025.
\newblock \url{https://doi.org/10.4230/LIPIcs.FSTTCS.2025.15}.

\bibitem{AfshinmehrEtAl2026}
M.~Afshinmehr, A.~Ashuri, P.~Mahmoudkhan, K.~Mehlhorn, and
A.~M.~Shahrezaei.
\newblock EFX allocations exist on multi-graphs.
\newblock arXiv:2606.18665, 2026.
\newblock \url{https://arxiv.org/abs/2606.18665}.

\bibitem{LianeasEtAl2026}
T.~Lianeas, A.~Sgouritsa, and M.~M.~Sotiriou.
\newblock EFX allocation in (multi)hypergraphs.
\newblock In \emph{AAAI}, 40(20):17102--17110, 2026.
\newblock \url{https://doi.org/10.1609/aaai.v40i20.38759}.

\bibitem{KakatelisEtAl2026}
I.~Kakatelis, T.~Lianeas, A.~Sgouritsa, and M.~M.~Sotiriou.
\newblock Almost EFX in hypergraphs.
\newblock arXiv:2606.26948, 2026.
\newblock \url{https://arxiv.org/abs/2606.26948}.

\bibitem{AlkassarFouzMehlhorn2026}
E.~Alkassar, M.~Fouz, and K.~Mehlhorn.
\newblock Complete EFX allocations exist for four additive agents and up to
nine goods.
\newblock arXiv:2608.08590, 2026.
\newblock \url{https://arxiv.org/abs/2608.08590}.

\bibitem{MarkakisSantorinaios2023}
E.~Markakis and C.~Santorinaios.
\newblock Improved EFX approximation guarantees under ordinal-based
assumptions.
\newblock In \emph{AAMAS}, pages 591--599, 2023.
\newblock \url{https://www.ifaamas.org/Proceedings/aamas2023/pdfs/p591.pdf}.

\bibitem{SgouritsaSotiriou2025}
A.~Sgouritsa and M.~M.~Sotiriou.
\newblock On the existence of EFX allocations in multigraphs.
\newblock In \emph{AAMAS}, pages 2735--2737, 2025.
\newblock \url{https://doi.org/10.5555/3709347.3743995}.

\bibitem{DeligkasEtAl2025}
A.~Deligkas, E.~Eiben, T.-L.~Goldsmith, and V.~Korchemna.
\newblock EF1 and EFX orientations.
\newblock In \emph{IJCAI}, pages 56--63, 2025.
\newblock \url{https://doi.org/10.24963/ijcai.2025/7}.

\bibitem{KanellopoulosEtAl2025}
S.~Kanellopoulos, E.~Nemery, C.~Pergaminelis, M.~M.~Sotiriou, and
M.~Vasilakis.
\newblock EF(X) orientations: A parameterized complexity perspective.
\newblock arXiv:2512.25033, 2025.
\newblock \url{https://arxiv.org/abs/2512.25033}.

\bibitem{LohPagh2014}
P.-S.~Loh and R.~Pagh.
\newblock Thresholds for extreme orientability.
\newblock \emph{Algorithmica}, 69(3):522--539, 2014.
\newblock \url{https://doi.org/10.1007/s00453-013-9749-4}.

\bibitem{ChiarelliEtAl2023}
N.~Chiarelli, M.~Krnc, M.~Milani\v{c}, U.~Pferschy, N.~Piva\v{c}, and
J.~Schauer.
\newblock Fair allocation of indivisible items with conflict graphs.
\newblock \emph{Algorithmica}, 85:1459--1489, 2023.
\newblock \url{https://doi.org/10.1007/s00453-022-01079-8}.

\bibitem{AkramiEtAlCounterexample2026}
H.~Akrami, A.~Mayorov, K.~Mehlhorn, S.~Srinivas, and C.~Weidenbach.
\newblock A counterexample to EFX: \(n\ge 3\) agents, \(m\ge n+5\) items,
submodular valuations via SAT-solving.
\newblock arXiv:2604.18216v3, 2026.
\newblock \url{https://arxiv.org/abs/2604.18216}.

\bibitem{AfshinmehrEtAlJAAMAS2026}
M.~Afshinmehr, A.~Danaei, M.~Kazemi, K.~Mehlhorn, and N.~Rathi.
\newblock EFX allocations and orientations on bipartite multi-graphs: A
complete picture.
\newblock \emph{Autonomous Agents and Multi-Agent Systems}, 40:32, 2026.
\newblock \url{https://doi.org/10.1007/s10458-026-09754-8}.

\bibitem{BlazejEtAl2025}
V.~Bla\v{z}ej, S.~Gupta, M.~S.~Ramanujan, and P.~Strulo.
\newblock Tractable graph structures in EFX orientation.
\newblock In \emph{Algorithmic Game Theory (SAGT)}, LNCS 15953, pages
175--190, 2025.
\newblock \url{https://doi.org/10.1007/978-3-032-03639-1_10}.

\end{thebibliography}
\end{document}